\keywords{Acceleration, word equations, length constraints, Presburger with 
divisibility}
\newcommand{\OMIT}[1]{}
\newcommand{\defn}[1]{\textit{#1}} 
\newcommand{\mcl}[1]{\ensuremath{\mathcal{#1}}}
\newcommand{\struct}{\mathfrak{S}}
\newcommand{\Trans}[1]{\struct_{#1}}
\newcommand{\transysDom}{S}
\newcommand{\transys}{\ensuremath{\langle \transysDom; \to\rangle}}
\newcommand{\PAD}{\ensuremath{\text{PAD}}}
\newcommand{\divides}{|}
\newcommand{\PA}{\ensuremath{\mcl{P}}}
\newcommand{\ialphabet}{\ensuremath{A}}
\newcommand{\N}{\ensuremath{\mathbb{N}}}
\newcommand{\B}{\ensuremath{\mathbb{B}}}
\newcommand{\Z}{\ensuremath{\mathbb{Z}}}
\newcommand{\vecX}{\ensuremath{\text{\rm \bfseries x}}}
\newcommand{\vecY}{\ensuremath{\text{\rm \bfseries y}}}
\newcommand{\vecZ}{\ensuremath{\text{\rm \bfseries z}}}
\newcommand{\vecW}{\ensuremath{\text{\rm \bfseries w}}}
\newcommand{\vecV}{\ensuremath{\text{\rm \bfseries v}}}
\theoremstyle{definition}
\newtheorem{remark}[thm]{Remark} 
\newcommand{\problemx}[3]{
\par\noindent\underline{\sc#1}\par\nobreak\vskip.2\baselineskip
\begingroup\clubpenalty10000\widowpenalty10000
\setbox0\hbox{\bf Instance: }\setbox1\hbox{\bf Question: }
\dimen0=\wd0\ifnum\wd1>\dimen0\dimen0=\wd1\fi
\vskip-\parskip\noindent
\hbox to\dimen0{\box0\hfil}\hangindent\dimen0\hangafter1\ignorespaces#2\par
\vskip-\parskip\noindent
\hbox to\dimen0{\box1\hfil}\hangindent\dimen0\hangafter1\ignorespaces#3\par
\endgroup}
\newcommand{\CA}{\ensuremath{\mathcal{C}}} 
\newcommand{\Aut}{\ensuremath{\mathcal{A}}} 
\newcommand{\AutB}{\ensuremath{\mathcal{B}}} 
\newcommand{\transrel}{\ensuremath{\Delta}} 
\newcommand{\controls}{\ensuremath{Q}} 
\newcommand{\finals}{\ensuremath{F}} 
\newcommand{\Lang}{\ensuremath{\mathcal{L}}} 
\newcommand{\AutRun}{\ensuremath{\rho}} 
\newcommand{\To}{\Rightarrow}
\newcommand{\counters}{\ensuremath{X}}
\newcommand{\ctr}{\ensuremath{x}}
\theoremstyle{thm}
\newtheorem{theorem}[thm]{Theorem}
\newtheorem{proposition}[thm]{Proposition}
\newtheorem{lemma}[thm]{Lemma}
\theoremstyle{definition}
\newtheorem{example}[thm]{Example}
\newcommand{\ID}{\ensuremath{\text{\texttt{ID}}}}
\newcommand{\SUB}{\ensuremath{\text{\texttt{SUB}}}}
\newcommand{\DEC}{\ensuremath{\text{\texttt{DEC}}}}
\newcommand{\LEN}{\ensuremath{\text{\textsc{Len}}}}
\newcommand{\ACCELERATE}{\ensuremath{\text{\textsc{L-Accelerate}}}}
\newcommand{\pre}{\ensuremath{\mathit{pre}}}
\def\soln{{\sigma}}
\def\Vars{{V}}
\def\set#1{{\{ #1 \}}}
\newlength{\commentWidth}
\newcommand\lmcschanged[1]{{{#1}}}
\newif\ifdraft\draftfalse
\newcommand\anthony[1]{{\color{blue}
[#1 - \textbf{Anthony}]}}
\newcommand\rupak[1]{{\color{magenta}
[#1 - \textbf{Rupak}]}}
\newcommand\todo[1]{}
\newcommand\anthony[1]{}
\newcommand\rupak[1]{}
\newcommand\todo[1]{}
\newcommand\shortlong[2]{#2}
\title[Quadratic word equations with length constraints]{
Quadratic Word Equations with Length Constraints, Counter Systems, and 
Presburger Arithmetic with Divisibility 
%
}
\author[A.~W. Lin]{Anthony W. Lin}[a]
\address{TU Kaiserslautern and Max-Planck Institute for Software Systems, 
Kaiserslautern, Germany}
\email{lin@cs.uni-kl.de}
\author[R.~Majumdar]{Rupak Majumdar}[b]
\address{Max Planck Institute for Software Systems, Kaiserslautern, Germany}
\email{rupak@mpi-sws.org}
\begin{document}

\begin{abstract}
    Word equations are a crucial element in the theoretical foundation of 
constraint solving over strings.
A word equation relates two words over string variables and constants. 
Its solution amounts to a function mapping variables to constant strings that 
equate the left and right hand sides of the equation. 
While the problem of solving word equations is decidable, 
the decidability of the problem of solving a word equation
with a length constraint (i.e., a constraint relating the lengths of 
words in the word equation) has remained a long-standing open problem.
We focus on the subclass of quadratic word equations, i.e., 
in which each variable occurs at most twice. 
We first show that the length abstractions of solutions to quadratic
word equations are in general not Presburger-definable.
We then describe a class of counter systems with Presburger transition relations
which capture the length abstraction of a quadratic word equation with regular
constraints.
We provide an encoding of the effect of a simple loop of the counter systems
in the existential theory of Presburger Arithmetic with divisibility (PAD).

Since PAD is decidable (NP-hard and is in NEXP), we obtain a decision procedure 
for quadratic words equations with
length constraints for which the associated counter system is \emph{flat} (i.e.,
all nodes belong to at most one cycle).
In particular, we show a decidability result (in fact, also an NP algorithm with a PAD oracle) 
for a recently proposed NP-complete fragment of word equations called regular-oriented word equations,
when augmented with length constraints.  
We extend this decidability result (in fact, with a complexity upper bound of 
PSPACE with a PAD oracle) in the presence of regular constraints. 

\OMIT{
Finally,
we conjecture that the length abstractions of quadratic word equations is
effectively expressible in PAD.
}

\OMIT{

identify a simple non-trivial class of quadratic word equations whose
solution lengths are not expressible in Presburger Arithmetic but are 
effectively expressible in Existential Presburger with Divisibility.
}
\OMIT{
=======
A word equation is an equation relating two words over string variables and 
string constants. A solution to a word equation amounts to a morphism 
mapping variables to concrete strings that equate the left and right hand sides
of the equation. 
While the problem of solving word equations is decidable (Makanin 1977),
the decidability of the problem of solving a word equation
with a length constraint (i.e., a constraint relating the lengths of words 
in the word equation) has remained a long-standing open problem (Matiyasevich 1968). 
In this paper, we focus on the subclass of quadratic word equations, in which
each variable occurs at most twice.
We show a connection between proof trees for quadratic word equations and a class
of counter automata.
We show that if the proof tree is flat, then the counter automaton can be accelerated
and the length abstraction of all solutions is then expressible in the theory of Presburger
arithmetic with divisibility.
In particular, we get a decision procedure for regular-oriented quadratic equations,
a subclass of quadratic equations proposed recently, because proof trees for this class
is flat.
}


\end{abstract}
\maketitle

\section{Introduction}
\label{sec:intro}

Reasoning about strings is a fundamental problem in computer science and mathematics.
The first order theory over strings and concatenation is undecidable.
A seminal result by Makanin \cite{Makanin} (see also \cite{Diekert,Jez})
shows that the satisfiability problem for the \emph{existential fragment} is decidable,
by giving an algorithm for the satisfiability of \emph{word equations}. 
A word equation $L = R$ consists of two words $L$ and $R$ over an alphabet of 
constants and variables.
It is satisfiable if there is a mapping $\sigma$ from the variables to strings over the
constants such that $\sigma(L)$ and $\sigma(R)$ are syntactically identical.
\OMIT{
Over the years, Makanin's decidability result has been improved in different ways,
to find alternate, simpler, algorithms, to characterize its complexity, 
and to generalize the result to broader classes, for example, in the presence of
regular constraints on the solutions \cite{Schulz}.
It is known today that the problem is in PSPACE and NP-hard \cite{Diekert,Plandowski,Jez}.
}

An original motivation for studying word equations was to show undecidability of
Hilbert's 10th problem (see, e.g., \cite{Matiyasevich68}). 
\lmcschanged{This was motivated
among others by an earlier result by Quine \cite{quine} 
that first-order theory over word equations essentially coincides with
first-order theory of arithmetic.}
While Makanin's later result shows that word equations could not, by themselves, show undecidability, Matiyasevich in 1968
considered an extension of word equations with \emph{length constraints} as a possible route to showing undecidability of Hilbert's
10th problem \cite{Matiyasevich68}.
A length constraint constrains the solution of a word equation by requiring a
linear relation to
hold on the lengths of words in a solution $\sigma$ (e.g., $|x| = |y|$, where
$|\cdot|$ denotes the string-length function).
The decidability of word equations with length constraints remains open. 

In recent years, reasoning about strings with length constraints has found renewed interest through applications
in program verification and reasoning about security vulnerabilities.
The focus of most research has been on developing practical string solvers
(cf.~\cite{fmsd14,S3,BTV09,Abdulla14,cvc4,HAMPI,Z3-str3,popl18-efficient,fang-yu-circuits,Berkeley-JavaScript}).
These solvers are sound but make no claims of completeness.
Relatively few results are known about the decidability status of strings with length and other constraints
(see \cite{Ganesh-boundary} for an overview of the results in this area). 
The main idea in most existing decidability results is the encoding of length
constraints into Presburger arithmetic
\cite{Ganesh13,Ganesh-boundary,Abdulla14,LB16}.
However, as we shall see in this paper, the length abstraction of a word
equation (i.e.~the set of possible lengths of variables in its solutions)
need not be Presburger definable.
\OMIT{
(Indeed, this was Matiyasevich's motivation in studying this problem as a way to prove undecidability of Hilbert's 10th problem.)
}

In this paper, we consider the case of \emph{quadratic} word equations, in which each variable can appear at most twice \cite{Lentin,diekert-quadratic1},
together with length constraints and \emph{regular constraints} (conjunctions
$\bigwedge_{i=1}^n x \in L_i$ of assertions that the variable $x$ must be
assigned a string in the regular language $L_i$ for each $i$).
For quadratic word equations, there is a simpler decision procedure (called the Nielsen transform or Levi's method) based
on a non-deterministic proof graph construction.
The technique can be extended to handle regular constraints \cite{diekert-quadratic1}.
However, we show that already for this class (even for a simple equation like
$xaby = yabx$, where $x,y$ are variables and $a,b$ are constants), the length 
abstraction need not be Presburger-definable.
Thus, techniques based on Presburger encodings are not sufficient to prove decidability.

Our first observation in this paper is a connection between the problem of quadratic word equations with length constraints
and a class of \lmcschanged{terminating} counter systems with Presburger 
transitions.
Informally, the counter system has control states corresponding to the nodes of
the proof graph constructed by Levi's method,
and a counter standing for the length of a word variable. 
\lmcschanged{Each counter value either remains the same or gets decreased
in each step of Levi's method. When all of the counters remain the same,
the counter system goes to a state, from which the previous state can no longer
be visited.}
Thus, from any initial state, the counter system terminates.
We show that the set of initial counter values which can lead to a successful leaf (i.e., one containing the trivial equation $\epsilon = \epsilon$)
is precisely the length abstraction of the word equation.

Our second observation is that 
the reachability relation for a simple loop of the counter system can be encoded in 
the existential theory of Presburger arithmetic with divisibility ($\PAD$).
%
As $\PAD$ is decidable \cite{Lipschitz,LOW15}, we obtain a technique
to symbolically represent the reachability relation for \emph{flat} counter systems,
in which each node belongs to at most one loop.

Moreover, the same encoding shows decidability for word equations with length
constraints, provided the proof tree is associated with flat counter systems.
In particular, we show that the class of \emph{regular-oriented} word equations, introduced by \cite{DMN17},
have flat proof graphs.
Thus, the satisfiability problem for quadratic regular-oriented word equations with length constraints 
is decidable. In fact, we obtain an NP algorithm with an oracle access to 
$\PAD$; the best complexity bound for the latter is NEXP and NP-hardness 
\cite{LOW15}.
A standard \defn{monoid technique} for handling regular constraints in word
equations (e.g.~see \cite{diekert-quadratic1}) can then be used to extend our 
decidability result in the presence of regular constraints. This results in a
PSPACE algorithm with an oracle access to $\PAD$.

While our decidability result is for a simple subclass, this class is already non-trivial without length and regular
constraints: satisfiability of regular-oriented word equations is NP-complete \cite{DMN17}.
Notice that in the presence of regular constraints the complexity (without
length constraints, and even without word equations) jumps to PSPACE, by virtue
of the standard PSPACE-completeness of intersection of regular languages
\cite{Kozen77}.
Moreover, we believe that the techniques in this paper --- the connection between acceleration and
word equations, and the use of existential Presburger with divisibility --- can 
pave the way to more sophisticated decision
procedures based on counter system acceleration.

\section{Preliminaries}
\label{sec:prelim}

\noindent
\textbf{General notation:}
Let $\N = \Z_{\geq 0}$ be the set of all natural numbers. For integers $i \leq 
j$, we use $[i,j]$ to denote the set $\{i,i+1,\ldots,j-1,j\}$ of integers.
If $i \in \N$, let $[i]$ denote $[0,i]$. We use $\preceq$ to denote the
component-wise ordering on $\N^k$, i.e., $(x_1,\ldots,x_k) \preceq
(y_1,\ldots,y_k)$ iff $x_i \leq y_i$ for all $i \in [1,k]$. If $\bar x \preceq
\bar y$ and $\bar x \neq \bar y$, we write $\bar x \prec \bar y$.

If $S$ is a set,
we use $S^*$ to denote the set of all finite sequences, or \emph{words}, $\gamma = s_1\ldots s_n$ 
over $S$. The length $|\gamma|$ of $\gamma$ is $n$.
The empty sequence is denoted by $\epsilon$. 
Notice that $S^*$ forms a monoid with the concatenation operator $\cdot$. If $\gamma'$ is a prefix
of $\gamma$, we write $\gamma' \preceq \gamma$. Additionally, if $\gamma' \neq
\gamma$ (i.e. a strict prefix of $\gamma$), we write $\gamma' \prec \gamma$.
Note that the operator $\preceq$ is overloaded here, but the meaning should be
clear from the context.

\smallskip
\noindent
\textbf{Words and automata:} 
We assume basic familiarity with word combinatorics and automata theory.
Fix a (finite) alphabet $\ialphabet$. 
For each finite word $w := w_1\ldots w_n \in
\ialphabet^*$, we write $w[i,j]$, where $1 \leq i \leq j \leq n$, to denote the segment
$w_i\ldots w_j$. 

Two words $x$ and $y$ are \emph{conjugates} if there exist words $u$ and $v$ such that
$x = uv$ and $y = v u$.
Equivalently, $x = \mathrm{cyc}^k(y)$ for some $k$ and for the \emph{cyclic
permutation} operation $\mathrm{cyc} : \ialphabet^* \rightarrow \ialphabet^*$, defined as
$\mathrm{cyc}(\epsilon) = \epsilon$, and $\mathrm{cyc}(a\cdot w) = w\cdot a$ 
for $a\in \ialphabet$ and $w \in \ialphabet^*$.

\lmcschanged{
Given a \defn{nondeterministic finite automaton (NFA)} 
$\Aut := (\ialphabet,\controls,\transrel,q_0,\lmcschanged{\finals})$, where 
$\transrel \subseteq
\controls \times \ialphabet \times \controls$,
a \defn{run} of $\Aut$ on $w = a_1\cdots a_n$ is a function $\AutRun: [0,n]
\to \controls$ with $\AutRun(0) = q_0$ that obeys the transition relation 
$\transrel$: $(\AutRun(i),a_{i+1},\AutRun(i+1)) \in \transrel$, for all $i \in 
[0,n-1]$.}
We may also denote the run $\rho$ by the word $\rho(0)\cdots \rho(n)$ over
the alphabet $\controls$. 
The run $\AutRun$ is said to be \defn{accepting} if \lmcschanged{$\rho(n) \in 
\finals$}, in which
case we say that the word $w$ is \defn{accepted} by $\Aut$. The language
$\Lang(\Aut)$ of $\Aut$ is the set of words in $\ialphabet^*$ accepted by
$\Aut$. In the sequel, for $p,q \in \controls$ we will write $\Aut_{p,q}$ to
denote the NFA $\Aut$ with initial state replaced by $p$ and
final state replaced by $q$.

\smallskip
\noindent
\textbf{Word equations:}
Let $\ialphabet$ be a (finite) alphabet of constants and $\Vars$ a set of 
variables; we assume $\ialphabet \cap \Vars = \emptyset$.
A \emph{word equation} $E$ is an expression of the form $L=R$, where $(L,R) \in
(\ialphabet\cup\Vars)^* \times (\ialphabet\cup\Vars)^*$.
A system of word equations is a nonempty set 
$\set{L_1 = R_1, L_2 = R_2,\ldots, L_k = R_k}$ of word equations.
The length of a system of word equations is the length 
$\sum_{i=1}^k (|L_i|+|R_i|)$.
A system is called \emph{quadratic} if each variable occurs at most twice in 
the entire system.
A \emph{solution} to a system of word equations 
is a homomorphism $\soln : (\ialphabet\cup\Vars)^* \rightarrow \ialphabet^*$
which maps each $a\in \ialphabet$ to itself that equates the l.h.s. and
r.h.s. of each equation, i.e.,
$\soln(L_i) = \soln(R_i)$ for each $i=1,\ldots,k$.

For each variable $x\in\Vars$, we shall use $|x|$ to denote a formal variable 
that stands for the length of variable $x$, i.e.,
for any solution $\sigma$, the formal variable $|x|$ takes the value $|\sigma(x)|$.
Let $L_{\Vars}$ be the set $\set{|x|\mid x\in\Vars}$.
A \emph{length constraint} is a formula in Presburger arithmetic whose free 
variables are in $L_\Vars$.

A \emph{solution} to a system of word equations with a length constraint $\Phi(|{x_1}|, \ldots, |{x_n}|)$
is a homomorphism $\soln : (\ialphabet\cup\Vars)^* \rightarrow \ialphabet^*$
which maps each $a\in \ialphabet$ to itself such that $\soln(L_i) = \soln(R_i)$ for each $i=1,\ldots,k$
and moreover $\Phi(|\soln(x_1)|, \ldots, |\soln(x_n)|)$ holds.
That is, the homomorphism maps each variable to a word in $\ialphabet^*$ such that each word equation
is satisfied, and the lengths of these words satisfy the length constraint.

The \defn{satisfiability problem} for word equations with length constraints 
asks, 
given a system of word equations and a length constraint, whether it has a solution.

We also consider the extension of the problem with regular constraints.
For a system of word equations, a 
variable $x\in\Vars$, and a regular language $\mathcal{L} \subseteq \ialphabet^*$, 
a \emph{regular constraint} $x\in \mathcal{L}$ imposes the additional restriction
that any solution $\soln$ must satisfy $\soln(x)\in \mathcal{L}$.
Given a system of word equations, a length constraint, and a set of regular constraints,
the satisfiability problem asks if there is a solution satisfying the word equation,
the length constraints, as well as the regular constraints.

In this paper we consider only \emph{a system consisting of a single word 
equation}. We note quickly that there are various possible 
satisfiability-preserving 
\lmcschanged{
reductions that reduce a system of word equations to a single 
word equation (e.g. \cite{KMP20})}, but most of them
do not in general preserve desirable properties such as the quadraticity of the 
constraints. 
One simple method that preserves the quadraticity of the
constraints is to simply concatenate the left/right hand sides of the equations
and simply introduce some length constraints. For example, suppose we have
constraints $xy = yz \wedge zz_1 = z_1z_2 \wedge |x| = |z_2|$. The resulting
constraint involving only a single word equation would be
$xyzz_1 = yzz_1z_2 \wedge |x| = |z_2| \wedge |x|+|y| = |y|+|z|$. 
In general, even this reduction does not preserve the properties that we 
consider in this paper (flatness, regularity, orientedness), unless further 
restrictions are imposed. \lmcschanged{See Section \ref{sec:decidability}} for the definitions.
We will remark this further in the relevant parts of the paper.

\smallskip
\noindent
\textbf{Linear arithmetic with divisibility:}
Let $\PA$ be a first-order language \lmcschanged{over the domain $\N$ of natural
numbers}
with equality = and inequality $\leq$ of numbers relations, and with terms 
being linear
polynomials with integer coefficients. 
We write $f(x)$, $g(x)$, etc., for terms in 
integer variables $x = x_1, \ldots , x_n$. 
Atomic formulas in Presburger arithmetic 
have the form $f(x) \leq g(x)$ or
$f(x) = g(x)$.
%
The language $\PAD$ of \emph{Presburger arithmetic with divisibility}
extends the language $\PA$ with a binary relation $\divides$ (for divides).
An atomic formula has the form 
$f(x) \leq g(x)$ or $f(x) = g(x)$ or $f(x) \divides g(x)$,
where $f(x)$ and $g(x)$ are linear polynomials with integer coefficients.
The full first order theory of $\PAD$ is undecidable, but the existential
fragment is decidable \cite{Lipschitz,LOW15}.

Note that the divisibility predicate $x \divides y$ is \emph{not} expressible
in Presburger arithmetic: a simple way to see this is that $\set{(x,y) \in\N^2 \mid x \divides y}$
is not a semi-linear set.

\smallskip
\noindent
\textbf{Counter systems:}
In this paper, we specifically use the term ``counter systems'' to mean
counter systems with Presburger transition relations
(e.g.~see \cite{FAST}). These more general transition relations can be 
simulated by standard Minsky's counter machines, but they are more useful for
coming up with decidable subclasses of counter systems.
A \defn{counter system} $\CA$ is a tuple $(\counters,\controls,\transrel)$, 
where $\counters = \{\ctr_1,\ldots,\ctr_m\}$ is a finite set of counters, 
$\controls$ is a finite set of control states, and $\transrel$ is a finite
set of transitions of the form $(q,\Phi(\bar\ctr,\bar\ctr'),q')$, where
$q,q' \in \controls$ and $\Phi$ is a Presburger formula with free variables
$\ctr_1,\ldots,\ctr_m,\ctr_1',\ldots,\ctr_m'$. A \defn{configuration} of
$\CA$ is a tuple $(q,\vecV) \in \controls \times \N^m$. 

The semantics of counter systems is given as a transition system.
A \defn{transition system} is a tuple $\struct := \transys$,
where $\transysDom$ is a set of \defn{configurations}
and $\to\ \subseteq S \times S$ is a binary relation over $S$.
A \defn{path} in $\struct$ is a sequence $s_0 \to \cdots \to s_n$ of
configurations $s_0,...,s_n \in S$. If $S' \subseteq S$, let $\pre^*(S')$ denote 
the set of $s \in S$ such that $s \to^* s'$ for some $s' \in S'$. 
We might write $\pre_{\to}^*(S')$ to disambiguate the transition system.

A counter system
$\CA$ generates the transition system $\Trans{\CA} = \transys$, where 
$\transysDom$ is the set of all configurations of $\CA$, and 
$(q,\vecV) \to (q',\vecV')$ if there exists a transition
$(q,\Phi(\bar\ctr,\bar\ctr'),q') \in \transrel$ such that
$\Phi(\vecV,\vecV')$ is true. 

In the sequel, we will be needing the notion of flat counter systems
\cite{BIL09,LS06,BFLS05,FAST}. 
Given a counter system $\CA = (\counters,\controls,\transrel)$, the
\defn{control structure} of $\CA$ is an edge-labeled directed
graph $G = (V,E)$ with the set $V = \controls$ of nodes and the set 
$E = \transrel$.
The counter system $\CA$ is \defn{flat} if each node $v \in V$ is contained
in at most one simple cycle. We now define the notion of ``skeletons'' in $\CA$,
which we will use in Section \ref{sec:decidability}. \lmcschanged{Intuitively, 
a skeleton
is simply a simple path with simple cycles along the way in the graph 
$\CA$}.
More precisely, considering $\CA$ a dag of SCCs, we define the 
\defn{signature} $s_\CA$ of $\CA$ as the (directed) graph whose nodes are SCCs 
in $\CA$ and there is an edge from $v$ to $w$ if there is a state in the SCC
$v$ that can go to a state in SCC $w$.
A \defn{skeleton} in $\CA$ is simply a subgraph $G' = (V',E')$ of $\CA$ that
is obtained by taking \lmcschanged{a (simple) path} in $s_\CA$ and expanding 
each node into the corresponding SCC in $\CA$.

\section{Solving Quadratic Word Equations}
\label{sec:nielsen}

We start by recalling a simple textbook recipe (Nielsen transformation, 
a.k.a., Levi's Method) \cite{Diekert,Lentin} for solving quadratic word 
equations, both 
for the cases with and without regular constraints. We then discuss the length
abstractions of solutions to quadratic word equations, and provide a
natural example that is not Presburger-definable.

\subsection{Nielsen transformation}
We will define a rewriting relation $E \To E'$ between quadratic word equations
$E, E'$.
Let $E$ be an equation of the form $\alpha w_1 = \beta w_2$ with 
$w_1,w_2 \in (\ialphabet \cup \Vars)^*$ and $\alpha,\beta \in \ialphabet \cup 
\Vars$. Then, there are several possible $E'$:
\begin{itemize}
    \item \defn{Rules for erasing an empty prefix variable}. These rules can be 
        applied 
        if $\alpha \in \Vars$ (symmetrically, $\beta \in \Vars$). We
        nondeterministically guess that $\alpha$ be the empty 
        word $\epsilon$, i.e., $E'$ is $(w_1 = \beta w_2)[\epsilon/\alpha]$.
       The symmetric case of $\beta \in \Vars$ is similar.
    \item \defn{Rules for removing a nonempty prefix}. These rules are
        applicable if each of $\alpha$ and $\beta$ is either a constant or
        a variable that we nondeterministically guess to be a nonempty word.
        There are several cases:
        \begin{description}
            \item[(P1)] $\alpha \equiv \beta$ (syntactic equality). In this 
                case, $E'$ is $w_1 = w_2$.
            \item[(P2)] $\alpha \in \ialphabet$ and $\beta \in \Vars$. In this case,
                $E'$ is $w_1[\alpha\beta/\beta] = \beta
                (w_2[\alpha\beta/\beta])$. In the sequel, to avoid notational
                clutter we will write $\beta w_2[\alpha\beta/\beta]$ instead of
                $\beta(w_2[\alpha\beta/\beta])$.
            \item[(P3)] $\alpha \in \Vars$ and $\beta \in \ialphabet$. In this case,
                $E'$ is $\alpha (w_1[\beta\alpha/\alpha]) = w_2[\beta\alpha/\alpha]$.
            \item[(P4)] $\alpha,\beta \in \Vars$. In this case, we
                nondeterministically guess if $\alpha \preceq \beta$ or
                $\beta \preceq \alpha$. In the former case,
                the equation $E'$ is $w_1[\alpha\beta/\beta] = \beta (w_2[\alpha\beta/\beta])$.
                In the latter case, the equation $E'$ is
                $E'$ is $\alpha (w_1[\beta\alpha/\alpha]) = w_2[\beta\alpha/\alpha]$. 
        \end{description}
\end{itemize}
Note that the transformation keeps an equation quadratic. 

\begin{proposition}
    $E$ is solvable iff
    $E \To^* (\epsilon = \epsilon)$. Furthermore,
    checking if $E$ is solvable is in PSPACE.
\end{proposition}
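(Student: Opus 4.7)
The plan is to prove the equivalence by induction in each direction, and then derive the complexity bound from the fact that the rewriting rules do not increase the size of the equation.

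For the soundness direction ($\Leftarrow$), I would show that each step $E \To E'$ is reversible at the level of solutions, so that since the trivial equation $\epsilon = \epsilon$ is solved by the empty homomorphism, any derivation $E \To^* (\epsilon = \epsilon)$ witnesses solvability of $E$ by backward induction on its length. Given a solution $\sigma'$ of $E'$, one constructs a solution of $E$ by undoing the guess made in the rule: rule (P1) leaves $\sigma'$ unchanged; the erasure rule is handled by extending $\sigma'$ with $\sigma(\alpha) := \epsilon$; and (P2), (P3), (P4) require undoing the substitution $\beta \mapsto \alpha\beta$ by setting $\sigma(\beta) := \alpha \sigma'(\beta)$ (or symmetrically for $\alpha$), after which one verifies $\sigma(\alpha w_1) = \sigma(\beta w_2)$ by direct substitution.

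For the completeness direction ($\Rightarrow$), given a solution $\sigma$ of $E = (L = R)$, I would construct a rewriting path to $\epsilon = \epsilon$ by strong induction on the measure $\mu(E,\sigma) := |\sigma(L)| + |\sigma(R)| + |\Vars(E)|$, where $\Vars(E)$ denotes the set of variables occurring in $E$. If $\mu = 0$ then $E$ is already $\epsilon = \epsilon$. Otherwise write $E$ as $\alpha w_1 = \beta w_2$. If $\sigma(\alpha) = \epsilon$ or $\sigma(\beta) = \epsilon$, apply the corresponding erasure rule, which reduces $|\Vars(E)|$ by one without enlarging the other components; otherwise both images are nonempty and must share a first letter since $\sigma(L) = \sigma(R)$, so one of (P1)--(P4) is applicable. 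I would pick the rule whose nondeterministic guess is consistent with $\sigma$---comparing $|\sigma(\alpha)|$ and $|\sigma(\beta)|$ to decide between the two subcases of (P4)---and define $\sigma'$ accordingly, e.g.\ $\sigma'(\beta) := \alpha^{-1}\sigma(\beta)$ in (P2). A direct calculation then shows $\mu(E',\sigma') < \mu(E,\sigma)$, closing the induction. The main obstacle is the (P4) case analysis, where one must check that the two guesses collectively cover every possibility for $\sigma$ and that the chosen $\sigma'$ indeed solves $E'$ after the substitution re-introduces $\alpha$ elsewhere.

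For the PSPACE bound, I would first prove a length-invariance lemma: $|E'| \leq |E|$ whenever $E \To E'$. Rule (P1) and erasure strictly decrease the length; in (P2)--(P4), the loss of one leading symbol on one side is compensated by a substitution $\beta \mapsto \alpha\beta$, which by quadraticity affects $\beta$ in at most one further position of $E$, yielding a net non-positive change. Hence every equation reachable from $E$ has size at most $n := |E|$, so there are at most exponentially many reachable equations. A nondeterministic procedure then guesses a rewriting sequence step by step, storing only the current equation in $O(n)$ space together with a polynomial-size binary step counter bounding the search depth, and accepts upon reaching $\epsilon = \epsilon$; Savitch's theorem then promotes this NPSPACE algorithm to PSPACE.
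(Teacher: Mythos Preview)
Your proposal is correct and follows essentially the same approach as the paper: soundness by showing each rewrite step preserves satisfiability backward, completeness by a termination measure on (equation, solution) pairs, and PSPACE via the length-nonincrease lemma plus Savitch. The only cosmetic difference is that the paper phrases the $(\Rightarrow)$ termination as ``each step either decreases the size of the equation or the length of a length-minimal solution,'' whereas you use the single additive measure $|\sigma(L)|+|\sigma(R)|+|\Vars(E)|$; both work and neither is materially simpler than the other.
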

    \begin{figure}
    \includegraphics[width=0.5\textwidth,angle = 0]{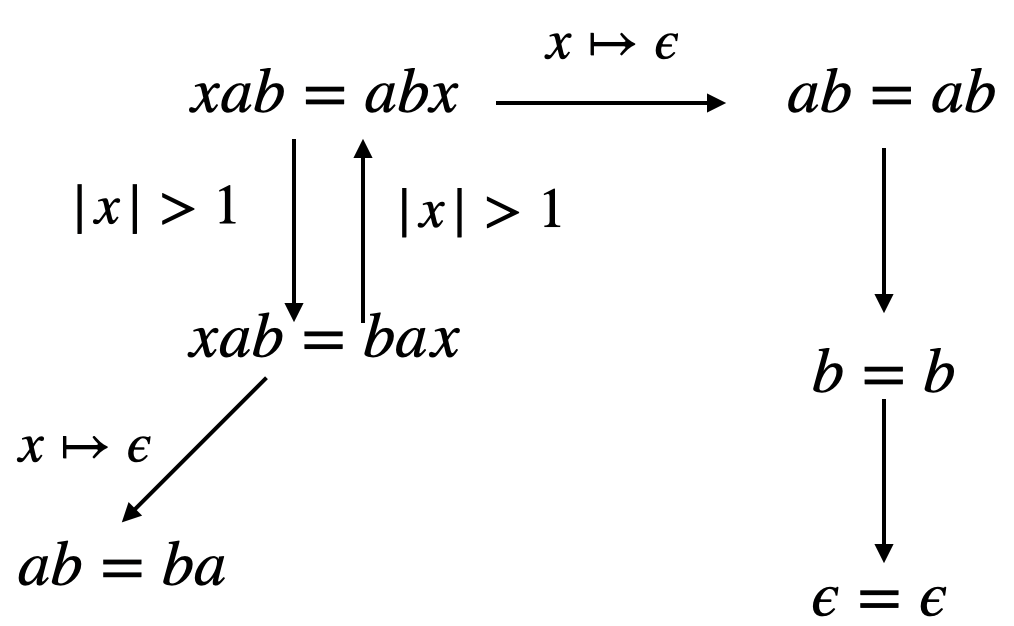}
        \caption{An example of Levi's method applied to a simple example. This
        shows that $xab = abx$ is satisfiable.\label{fig:nielsen_ex}}
    \end{figure}
    \lmcschanged{
The proof is standard (e.g. see \cite{Diekert}). For the sake of completeness,
we illustrate this with an example and provide the crux of the proof.
Figure \ref{fig:nielsen_ex} depicts an
application of Nielsen's transformation to show that $xab = abx$ is satisfiable.
It is not a coincidence that this is a finite graph because the set of reachable
nodes from any quadratic equation $E$ is actually finite. This finiteness 
property essentially can be seen as a corollary of the following two easily
checked properties:
(1) each rule of Nielsen transformation preserves the quadratic nature of an
equation, and (2) each rule of Nielsen transformation \emph{never increases} 
the length of an equation. This directly gives us a PSPACE algorithm: each step
can be implemented in polynomial space, and the size never increases.
The IFF statement in the above proposition can essentially be proven as follows.
The ($\Leftarrow$) direction follows from the fact that each rule of Nielsen
transformation preserves satisfiability. The ($\Rightarrow$) direction follows
from the fact that, when applied to quadratic equations, each step either 
decreases the size of the equation, or the length of a length-minimal solution. 
}


\subsection{Handling regular constraints}
Nielsen transformation easily extends to quadratic word equations with regular
constraints (e.g. see \cite{diekert-quadratic1}). 

Recall that we are given a finite set 
$S = \{x_1 \in \Lang(\Aut^1),\ldots, x_n \in \Lang(\Aut^n)\}$ of regular 
constraints. \lmcschanged{We assume that the automata $\Aut^1,\ldots,\Aut^n$ 
have disjoint state-space.} 
This sequence $x_1,\ldots,x_n$ might contain repetition of
variables. Instead of allowing
multiple regular constraints per variable $x$, we will assign one
\emph{monoid element}
over boolean matrices for each variable $x$ in the following way. Suppose 
$\Aut^i$ has $r_i$ states, and we let $r = \sum_{i=1}^n r_i$. 
Let $\Aut$ be the automaton with $r$ states obtained by taking the 
union of $\Aut^1,\ldots,\Aut^n$. Let $\B$ be the 
usual boolean algebra with two elements $0, 1$.  For any given word
$w \in \ialphabet^*$, we can construct the
\defn{characteristic matrix} $M = (m_{i,j}) \in \B^{r\times r}$ (indexed by the 
states of
$\Aut$ without loss of generality) as follows: for any pair $i,j$ of states,
$m_{i,j} = 1$ iff $w \in \Lang(\Aut_{i,j})$.
Therefore, we can assign a monoid element to each variable $x$ in the following
way.
Take the subsequence $\AutB^1,\ldots,\AutB^m$ of
$\Aut^1,\ldots,\Aut^n$, for which there is a regular constraint $x \in
\Lang(\AutB^i)$. Suppose
\lmcschanged{$(q_0^1,\finals^1),\ldots,(q_0^m,\finals^m)$} are the 
pairs of initial state and set of final states of $\AutB^1,\ldots,\AutB^m$. 
Then,
there exist a homomorphism $\phi: \ialphabet^* \to \B^{r\times r}$,
such that
\lmcschanged{
\[
    w \in \bigcap_{i=1}^{m} \Lang(\AutB^i) \quad \text{iff} \quad 
    \forall i\in\{1,\ldots,m\}: \bigvee_{q_F \in \finals^i} 
        \phi(w)[q_0^i,q_F] = 1.
\]
}
Note that the term homomorphism here is justified since $\B^{r\times r}$ is
a monoid with the boolean matrix multiplication operator. In the following,
we will always restrict ourselves to the submonoid of $\B^{r\times r}$
containing only \emph{realisable characteristic matrices}, i.e., those matrices
$M = \phi(w)$, for some $w \in \ialphabet^*$. Note that checking whether a
boolean matrix is realisable is PSPACE-complete since it is equivalent to
checking emptiness of intersection of automata \cite{Kozen77}.

\OMIT{
Now observe that a monoid element $M \in \B^{r \times r}$ actually carries more
information that the given regular constraints on $x$, which do not necessarily say whether
or not $x \in \Lang(\Aut)_{p,q}$, for \emph{all} pair $p,q$ of states. 
We will
see later that an algorithm in this monoid framework would essentially
nondeterministically guess one of these monoid elements $M$ corresponding to
the regular constraints.
}

\OMIT{
We will think of these
constraints as a 
We assume that a regular
constraint $x \in \mathcal{L}$ is given as an NFA $\Aut_{p,q}$ representing $\mathcal{L}$. 
If $q_0$ and $q_F$ are the initial and final states (respectively) of an
NFA $\Aut$, we can be more explicit and write $\Aut_{q_0,q_F}$ instead of 
$\Aut$. 
}

Our rewriting relation $\To$ now works over a pair $(E,f)$ consisting of a word
equation $E$ and a function mapping each variable $x$ in $E$ to a monoid 
element $M \in \B^{r\times r}$.
Let $E$ be an equation of the form $\alpha w_1 = \beta w_2$ with 
$w_1,w_2 \in (\ialphabet \cup \Vars)^*$ and $\alpha,\beta \in \ialphabet \cup 
\Vars$. We now define $(E,f) \To (E',f')$ by extending the
\lmcschanged{previous} definition of $\To$ 
without regular constraints. 
More
precisely, $(E,f) \To (E',f')$ iff $E \To E'$ and additionally do the following:
\begin{itemize}
    \item \defn{Rules for erasing an empty prefix variable $\alpha$}. When applied,
        ensure that $\phi(\epsilon) = f(\alpha)$. Without loss of generality,
        we may assume that all our automata have no $\epsilon$-transitions,
        in which case $f(\alpha)$ is the identity matrix.
        \OMIT{, i.e.,
        $f(\alpha) \in \Lang(\Aut)_{p,q}$ iff $\epsilon \in
        \Lang(\Aut)_{p,q}$. }
        We define $f'$ as the restriction of $f$ to $\Vars
        \setminus \{\alpha\}$.
    \item  \defn{Rules for removing a nonempty prefix}. For (P1), we 
        set $f'$ to be the restriction of $f$ to $\Vars \setminus \{\alpha\}$.
        For (P2)--(P4), assume that $E'$ is 
        $w_1[\alpha\beta/\beta] = \beta (w_2[\alpha\beta/\beta])$; the other case is symmetric. 
        We nondeterministically guess a monoid element $M_{\beta'}$. If
        $\alpha \in \ialphabet$, we check  that
        $f(\beta) = \phi(\alpha) \cdot M_{\beta'}$.
        If $\alpha \in \Vars$, we make sure that
        $f(\beta) = f(\alpha) \cdot M_{\beta'}$. Note that both of these
        checks can be done in polynomial time.
        We set $f'$ to be the same function $f$, but differs only on $\beta$:
        $f'(\beta) = M_{\beta'}$.
        \OMIT{
        In the case when $\alpha \in \ialphabet$, we
        also ensure that $\phi(\alpha) = f(\alpha)$.
        positive outcome implies removing this constraint from $S'$, while 
        on a negative outcome our algorithm simply fails on this branch.
        For any variable $y$ that is distinct from $\beta$, 
        we add all regular constraints $y \in L$ in $S$ to $S'$. If 
        $\alpha$ still occurs in $E'$, add regular constraints $\alpha \in L$
        in $S$ to $S'$.  If $S'$ is unsatisfiable, fail on this this branch.
    }
\end{itemize}

We now state the main property of the above algorithm. To this end,
a function $f: \Vars \to \B^{r\times r}$ is said to be \defn{consistent} with 
the set $S$ of regular constraints if, whenever \lmcschanged{$(x \in \Lang(\AutB)) \in 
S$ where $\AutB$ has initial state (resp. set of final states) $p$ (resp.
$\finals$), it is the case that $\bigvee_{q \in \finals} M[p,q] = 1$}, 
\lmcschanged{where $M := f(x)$}.
\begin{proposition}
    $(E,S)$ is solvable iff there exists $f: \Vars \to \B^{r\times r}$ 
    consistent with $S$ such that
    $(E,f) \To^* (\epsilon = \epsilon,\emptyset)$, where $\emptyset$ denotes the
    function with the empty domain. 
    Furthermore, checking if $(E,S)$ is solvable is in PSPACE.
\end{proposition}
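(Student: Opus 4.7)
The plan is to extend the proof of the earlier analogous proposition (without regular constraints) by adding the monoid-element bookkeeping. I would split the argument into soundness, completeness, and complexity.

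For soundness ($\Leftarrow$), I would induct on the length of the derivation $(E,f) \To^* (\epsilon=\epsilon,\emptyset)$. The base case is trivial: $E$ is already $\epsilon=\epsilon$ and any solution must set all removed variables to $\epsilon$; consistency of $f$ with $S$ requires $f(x)$ to equal $\phi(\epsilon)$ (the identity matrix, by the assumption of no $\epsilon$-transitions), so the $\epsilon$-assignment satisfies every regular constraint. In the inductive step, I assume $(E',f') \To^* (\epsilon=\epsilon,\emptyset)$ has a solution $\sigma'$ satisfying $\phi(\sigma'(y)) = f'(y)$ for each $y$ in the domain of $f'$, and I build a solution $\sigma$ to $(E,f)$. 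The construction mirrors the original Nielsen argument: for erasing rules set $\sigma(\alpha)=\epsilon$; for (P1) keep $\sigma$ the same; for (P2)--(P4) extend $\sigma'$ by $\sigma(\beta) = \alpha\cdot\sigma'(\beta)$ (or the symmetric substitution). The multiplicativity of $\phi$ together with the guessed equality $f(\beta) = \phi(\alpha)\cdot M_{\beta'}$ (resp.\ $f(\alpha)\cdot M_{\beta'}$) ensures $\phi(\sigma(\beta)) = f(\beta)$, so $f$-consistency is preserved and therefore $\sigma$ satisfies the regular constraints in $S$.

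For completeness ($\Rightarrow$), I would start from a solution $\sigma$ of $(E,S)$ and define $f(x) := \phi(\sigma(x))$. Consistency with $S$ holds because $\sigma(x)\in\Lang(\AutB)$ forces $\bigvee_{q\in\finals}\phi(\sigma(x))[p,q]=1$. I then build the derivation by the standard termination measure used for ordinary Nielsen transformation on quadratic equations (e.g.\ the pair $(|\sigma(L)|+|\sigma(R)|,|E|)$ lexicographically). At each step, the rule to fire is determined by $\sigma$: if $\sigma(\alpha)=\epsilon$, use the erasing rule; otherwise compare the first letters of $\sigma(\alpha w_1)$ and $\sigma(\beta w_2)$ and apply (P1)--(P4). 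The key new ingredient is to guess $M_{\beta'} := \phi(\sigma'(\beta))$, where $\sigma'$ is the updated solution; homomorphy of $\phi$ then certifies the required equation $f(\beta) = \phi(\alpha)\cdot M_{\beta'}$ or $f(\alpha)\cdot M_{\beta'}$. Iterating produces a derivation ending in $(\epsilon=\epsilon,\emptyset)$.

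For the PSPACE upper bound, each configuration $(E,f)$ is of polynomial size: the equation never grows beyond $|E|$ by the same argument as in the proposition without regular constraints, and $f$ is a map from the (bounded) variable set of $E$ into $\B^{r\times r}$ where $r$ is polynomial. Each rule application requires only a polynomial-time check (comparing matrix products). The one subtlety is that we must only guess \emph{realisable} monoid elements $M_{\beta'}$; realisability coincides with nonemptiness of an intersection of NFAs, which is in PSPACE. I would therefore run a nondeterministic polynomial-space procedure that iteratively guesses the next rule and the next monoid element, invoking the PSPACE realisability test as a subroutine, and appeal to Savitch's theorem ($\mathrm{NPSPACE}=\mathrm{PSPACE}$) for the final bound.

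The main obstacle I anticipate is bookkeeping the realisability constraint cleanly: one must argue that the PSPACE realisability check can be woven into the nondeterministic search without blowing up the space budget, and must verify that in the completeness direction the element $\phi(\sigma'(\beta))$ produced from the actual solution is automatically realisable, so that the guessing procedure genuinely has a witness to discover.
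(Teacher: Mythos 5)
Your proposal is correct and follows essentially the same route the paper intends (the standard Nielsen argument enriched with the monoid bookkeeping, a termination measure on $(|\soln(L)|+|\soln(R)|,|E|)$ for completeness, multiplicativity of $\phi$ for soundness, and Savitch's theorem for the PSPACE bound); the paper itself gives no more detail than this. The one refinement worth noting is that the realisability restriction you flag is load-bearing not only for the complexity analysis but for soundness itself, specifically at a (P1) step on a variable $\alpha$, where $f(\alpha)$ is dropped from the domain without any multiplicative check and one must be able to pick a nonempty word realizing $f(\alpha)$ when extending $\soln'$ to $\soln$.
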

\begin{figure}
    \includegraphics[width=0.8\textwidth,angle = 0]{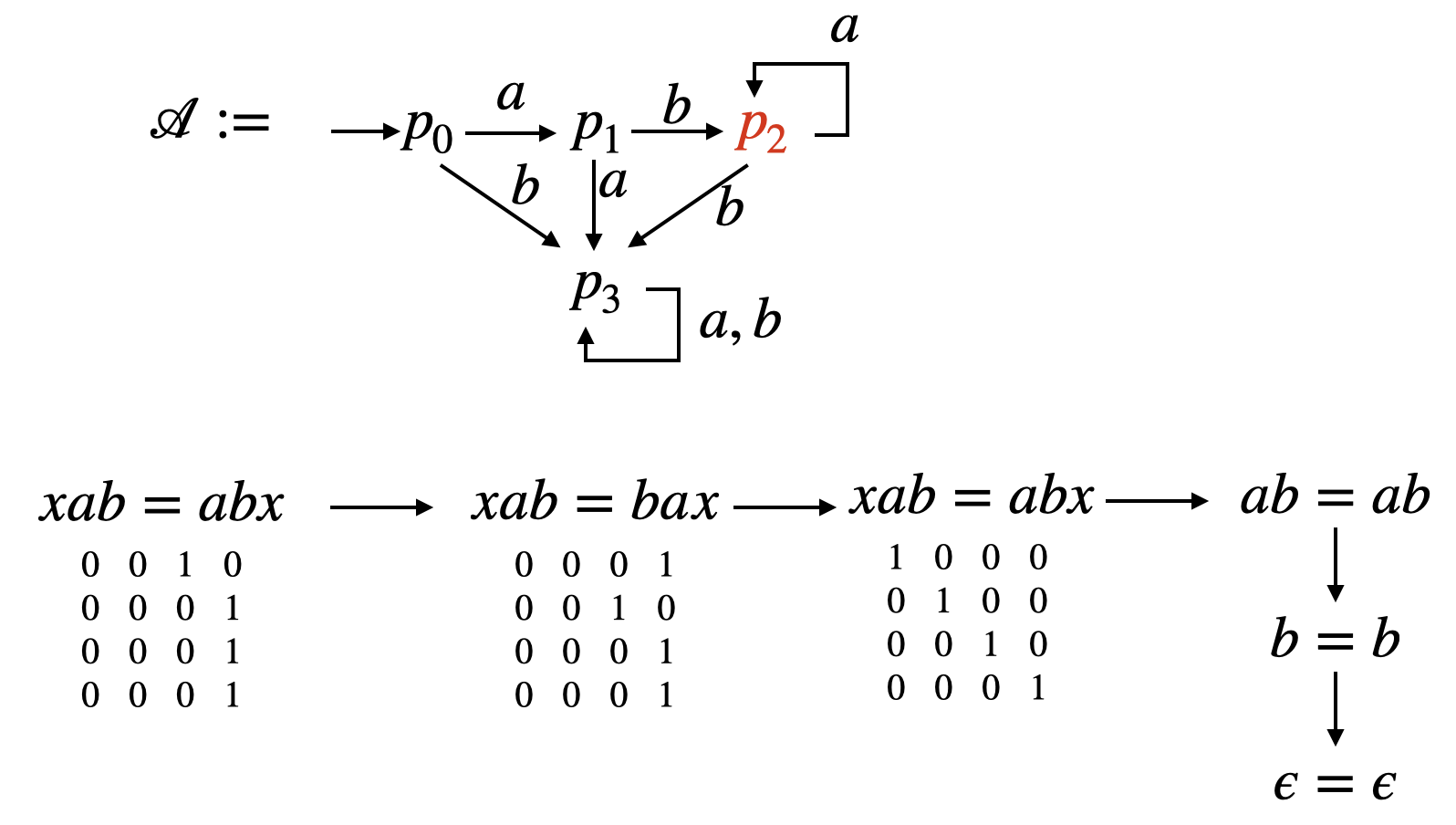}
        \caption{Building on Example from Figure \ref{fig:nielsen_ex}, we
        add the regular constraint $x \in aba^*$ given by automaton 
        $\mathcal{A}$. The automaton is quite large, but we show only the
        path from $(E,f)$ with consistent $f$ to $(\epsilon =
        \epsilon,\emptyset)$. Note that $f$ can be realized by $w = ab$.
        \label{fig:nielsen_regex}}
\end{figure}
See Figure \ref{fig:nielsen_regex} for an example.
Note that the above algorithm is a nondeterministic polynomial-space algorithm
(because of the guessing of $f$ in the above proposition, and also the relation
$\To$), which still gives us a PSPACE algorithm because of the standard 
Savitch's Theorem that PSPACE = NPSPACE \cite{Sipser-book}.
\OMIT{
Note that this is still a PSPACE algorithm because it never creates a new
NFA or adds new states to existing NFA in the regular constraints, but rather 
adds a regular constraint $x \in \mathcal{L}(\Aut_{p,q})$ to a variable $x$, where $\Aut$
is an NFA that is already in the regular constraint.
}

\subsection{Generating all solutions using Nielsen transformation}
One result that we will need in this paper is that Nielsen transformation
is able to \emph{generate all solutions} of quadratic word equations with 
regular constraints. To clarify this, we extend the definition of $\To$ so that
each configuration $E$ or $(E,f)$ in the graph of $\To$ is also annotated by
an assignment $\soln$ of the variables in $E$ to concrete strings. We write 
$E_1[\soln_1] \To E_2[\soln_2]$ if $E_1 \To E_2$ and $\soln_2$ is the
modification from $\soln_1$ according to the operation used to obtain $E_2$ 
from $E_1$. Observe that the domain of $\soln_2$ is a subset of the domain of
$\soln_1$; in fact, some rules (e.g., erasing an empty prefix variable) could
remove a variable in the prefix in $E_1$ from $\soln_1$. The following
example illustrates how $\To$ works with this extra annotated assignment.
Suppose that $\soln_1(x) = ab$ and $\soln_1(y) = abab$ 
and $E_1 := xy = yx$ and $E_2$ is obtained from $E_1$ using rule (P4), i.e.,
substitute $xy$ for $y$. In this case, $\soln_2(x) =
\soln_2(y) = \soln_1(x) = ab$. Observe that $E_2[\soln_2] \To E_3[\soln_3] \To
E_4[\soln_4]$, where $E_3 := E_2$, $\soln_3(x) = ab$, $\soln_3(y) = \epsilon$,
$E_4 := x = x$, and $\soln_4(x) = ab$.
The definition for the case with regular constraints is identical.

\begin{proposition}
    $(E,f)[\soln] \to^* (\epsilon = \epsilon,\emptyset)[\soln']$ where 
    $\soln'$ has the empty domain iff
    $\soln$ is a solution of $(E,f)$.
    \label{prop:all-soln}
\end{proposition}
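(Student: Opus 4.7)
The plan is to prove the two directions of the biconditional separately, using an induction on the derivation length in one direction and a well-founded induction on the pair (equation, solution) in the other.

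For the soundness direction ($\Rightarrow$), I would proceed by induction on the length $n$ of the derivation $(E,f)[\soln] \To^n (\epsilon=\epsilon,\emptyset)[\soln']$. The base case is immediate since $\soln' = \emptyset$ trivially ``solves'' $\epsilon = \epsilon$. For the inductive step, I would establish a local lemma: for each rule of $\To$, if $(E_1,f_1)[\soln_1] \To (E_2,f_2)[\soln_2]$ and $\soln_2$ is a solution of $(E_2,f_2)$, then $\soln_1$ is a solution of $(E_1,f_1)$. This is a routine case analysis: for the empty-prefix-erasure rule the annotation enforces $\soln_1(\alpha) = \epsilon$, and then $\soln_1(L_1) = \soln_1(w_1) = \soln_2(w_1) = \soln_2(R_1)[\alpha/\epsilon] = \soln_1(R_1)$; for (P1)--(P4) the annotated update of $\soln$ is chosen precisely to invert the substitution used to obtain $E_2$ (e.g.~for (P2), $\soln_1(\beta) = \alpha \cdot \soln_2(\beta)$). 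Consistency with $f$ is preserved by the monoid identity $\phi(\soln_1(\beta)) = \phi(\alpha)\cdot\phi(\soln_2(\beta)) = \phi(\alpha)\cdot f_2(\beta) = f_1(\beta)$.

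For the completeness direction ($\Leftarrow$), assume $\soln$ is a solution of $(E,f)$. I would proceed by well-founded induction on the measure $\mu(E,\soln) = (|E|, \sum_{x \in \Vars(E)} |\soln(x)|)$ ordered lexicographically. If $E$ is $\epsilon = \epsilon$, then $\Vars(E) = \emptyset$ forces $\soln = \emptyset$ and we are done. Otherwise $E$ has the form $\alpha w_1 = \beta w_2$, and since $\soln(L) = \soln(R)$, the two sides have equal first character (or one is empty, in which case a variable prefix maps to $\epsilon$). Case analysis on $\soln(\alpha)$ and $\soln(\beta)$ selects the appropriate rule with the corresponding nondeterministic choice: if $\alpha \in \Vars$ with $\soln(\alpha) = \epsilon$ apply the erasure rule (strictly decreasing $|E|$); if $\alpha \equiv \beta$ apply (P1) (decreasing $|E|$); in (P2)--(P4) the substitution shortens $\soln$ by one letter on some variable, so the second coordinate strictly drops. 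In each case the resulting annotated configuration $(E',f')[\soln']$ still satisfies the hypotheses ($\soln'$ solves $(E',f')$ by a direct check), so the inductive hypothesis gives a continuation to $(\epsilon=\epsilon,\emptyset)$.

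The main obstacle is the second direction: one must carefully verify, for each of the four prefix-removal rules, that (a) the nondeterministic choice dictated by $\soln$ is actually permitted by $\To$ (in particular, that the guessed monoid element $M_{\beta'}$ is realisable and equals $\phi(\soln'(\beta))$), and (b) the measure $\mu$ strictly decreases even though $|E|$ may stay the same or briefly increase in (P2)--(P4). The latter is the subtle point: in (P2)--(P4) the equation does not shrink in size, but the updated $\soln'$ has a strictly smaller total length because one variable's image loses at least one character, which is exactly why the lexicographic measure is needed. Once this case analysis is complete, combining the two directions yields the proposition.
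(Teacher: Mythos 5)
Your proposal is correct and follows essentially the same route as the paper, which does not spell out a proof at all but simply defers to the standard correctness argument for Nielsen transformation (each rule preserves and reflects solutions in one direction; induction on the size of the equation together with the length of the solution in the other), exactly the two-part argument you reconstruct. One caution: your aside that $|E|$ ``may stay the same or briefly increase'' in (P2)--(P4) is wrong, and if it were true it would break your lexicographic measure $(|E|,\sum_x|\soln(x)|)$; in fact quadraticity guarantees that each rule never increases $|E|$ (at most one further occurrence of $\beta$ gets lengthened while the prefix $\alpha$ is removed), which is precisely why the measure works.
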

This proposition immediately follows from the proof of correctness of Nielsen
transformation for quadratic word equations (cf.~\cite{Diekert}).

\subsection{Length abstractions and semilinearity}
\lmcschanged{
Below we tacitly assume an implicit ordering of the set of variables $\Vars = 
\{x_1,\ldots,x_k\}$ --- which ordering is immaterial, so long as one is fixed.}
Given a quadratic word equation $E$ with constants $\ialphabet$ and variables
$\Vars = \{x_1,\ldots,x_k\}$, its \emph{length abstraction} is defined as 
follows
\[
    \LEN(E) = \{ (|\soln(x_1)|,\ldots,|\soln(x_k)|) : \text{$\soln$ is a
        solution to $E$} \},
    \]
namely the set of tuples of numbers corresponding to lengths of solutions to 
$E$.  
\begin{example}
    Consider the quadratic equation $E:=\ xaby = yz$, where $\Vars = \{x,y,z\}$
    and $\ialphabet$ contains at least two letters $a$ and $b$. We will show
    that its length abstraction $\LEN(E)$ can be captured by the Presburger 
    formula $|z| = |x|+2$. Observe that each $(n_x,n_y,n_z) \in \LEN(E)$
    must satisfy $n_z = n_x + 2$ by a length argument on $E$. Conversely,
    we will show that each triple $(n_x,n_y,n_z) \in \N^3$ satisfying
    $n_z = n_x + 2$ must be in $\LEN(E)$. To this end, we will define a solution
    $\soln$ to $E$ such that $(|\soln(x)|,|\soln(y)|,|\soln(z)|) =
    (n_x,n_y,n_z)$. Consider $\soln(x) = a^{n_x}$. 
    Then, for some $q \in \N$ and $r \in [n_x+1]$, we have
    $n_y = q(n_x+2)+r$. Let $w$ be a prefix of $\soln(x)ab$ of 
    length $r$. Therefore, for some $v$, we have $wv = \soln(x)ab$.
    Define $\soln(y) = (\soln(x)ab)^qw$. We then have $\soln(x)ab\soln(y)
    = \soln(y)vw$. Thus, setting $\soln(z) = vw$ gives us a satisfying
    assignment for $E$ which satisfies the desired length constraint.
    \OMIT{
    Therefore, $\soln(z)$ is a 
    prefix of $\soln(x)ab\soln(z)$. That is, for some $v \in \ialphabet^*$,
    it is the case that $\soln(z)v = \soln(x)ab\soln(z)$. Letting 
    $\soln(y) = v$ suffices to make $\soln$ satisfy $E$. Also, $|\soln(y)| =
    |\soln(x)ab\soln(z)| - |\soln(z)| = |\soln(x)| + 2$, which satisfies
    the formula $|x|=|y|+2$. 
}
    \qed
\end{example}
However, it turns out that Presburger Arithmetic is not sufficient for
capturing length abstractions of quadratic word equations.
\begin{theorem}
There is a quadratic word equation whose length abstraction is not
    Presburger-definable.
\end{theorem}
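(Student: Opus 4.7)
The plan is to exhibit the equation $E \colon xaby = yabx$ (with distinct constants $a,b \in \ialphabet$ and variables $x,y \in \Vars$, which is clearly quadratic) and to establish that its length abstraction $\LEN(E) \subseteq \N^2$, with coordinates $(|x|, |y|)$, equals $T := \{(m,n) \in \N^2 : \gcd(m+2, n+2) \ge 2\}$. Since $T$ turns out to be non-Presburger-definable, the theorem follows.

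To characterize $\LEN(E)$, I would use combinatorics on words. Fix a solution $\sigma$ with $u := \sigma(x)$, $v := \sigma(y)$, and WLOG $|u| \le |v|$. Comparing the length-$|u|$ prefixes of $uabv$ and $vabu$ shows that $u$ is a prefix of $v$; writing $v = uv'$, the equation simplifies to $(abu)\,v' = v'\,(abu)$, so $abu$ and $v'$ commute. By the Lyndon--Sch\"utzenberger commutation lemma, either $v' = \epsilon$ (whence $m = n$, and the claim holds since $\gcd(m+2, m+2) = m+2 \ge 2$) or there exist a word $w$ and integers $p, q \ge 1$ with $abu = w^p$ and $v' = w^q$. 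Since $abu$ begins with the two distinct letters $a, b$, the word $w$ cannot have length $1$, so $d := |w| \ge 2$. Then $d$ divides both $m+2$ and $n-m$, hence $d \mid \gcd(m+2, n+2)$, and $(m,n) \in T$. Conversely, given $(m,n)$ with $d := \gcd(m+2, n+2) \ge 2$, I would take $w := ab \cdot a^{d-2}$, $u := a^{d-2}\, w^{(m+2)/d - 1}$ (so $abu = w^{(m+2)/d}$ and $|u| = m$), and $v := u \cdot w^{(n-m)/d}$ (so $|v| = n$); using $ab \cdot a^{d-2} = w$, a direct computation yields $uabv = a^{d-2} w^{(m+n+4)/d - 1} = vabu$.

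For the second step, suppose towards contradiction that $T$ is Presburger-definable. Then so is its complement $\bar T = \{(m,n) : \gcd(m+2, n+2) = 1\}$, and consequently the set of primes $\{p \in \N : p \ge 2 \wedge \forall k\,(2 \le k < p \to \bar T(k-2, p-2))\}$ is Presburger-definable, since Presburger arithmetic is closed under first-order operations including universal quantification. But every Presburger subset of $\N$ is eventually periodic, whereas the primes are not, a contradiction.

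The main obstacle I anticipate is the converse direction of the length-abstraction characterization: from the bare divisibility datum $d = \gcd(m+2, n+2) \ge 2$ one must produce an explicit witness $(u, v)$ and verify that it solves $uabv = vabu$. The forward direction reduces to a routine commutation argument via Lyndon--Sch\"utzenberger, and the reduction to primality is standard once the gcd characterization is in hand.
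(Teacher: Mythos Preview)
Your proof is correct and uses the same witness equation $xaby=yabx$ as the paper, but your execution is in several respects cleaner. The paper states the length abstraction as a four-way disjunction
\[
|x|=|y|\ \vee\ (|x|=0\wedge |y|\equiv 0\pmod 2)\ \vee\ (|y|=0\wedge |x|\equiv 0\pmod 2)\ \vee\ (|x|,|y|>0\wedge \gcd(|x|+2,|y|+2)>1),
\]
which is in fact equivalent to your single condition $\gcd(m+2,n+2)\ge 2$; you have simply absorbed the first three disjuncts into the last. For the direction ``solution $\Rightarrow$ gcd condition'', the paper argues by a minimal-counterexample induction that mirrors Nielsen's transformation (repeatedly stripping $vab$ from $u$), whereas you reduce in one step to the commutation $(abu)v'=v'(abu)$ and invoke the commutation lemma. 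Your route is shorter and more conceptual; the paper's route has the virtue of staying within the Nielsen framework that the rest of the article develops. For the converse, both give essentially the same construction: the paper takes $\sigma(x),\sigma(y)\in (a^{d-1}b)^*a^{d-2}$ so that $\sigma(x)ab$ is a power of $a^{d-1}b$, while you arrange $ab\,u$ to be a power of $ab\,a^{d-2}$; these are conjugate primitive words of length $d$. One small remark: your converse construction tacitly needs $m\le n$ (for $w^{(n-m)/d}$ to make sense), so you should invoke the $x\leftrightarrow y$ symmetry there as you did in the forward direction. Your non-Presburger step via primality is a pleasant extra; the paper simply cites the folklore that coprimality is not semilinear.
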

To this end, we show that the length abstraction of $xaby = yabx$, where 
$a,b \in \ialphabet$ and $x,y \in \Vars$, is not Presburger definable.
\begin{lemma}
    The length abstraction $\LEN(xaby = yabx)$ coincides with tuples 
    $(|x|,|y|)$ of numbers satisfying the expression $\varphi(|x|,|y|)$
    defined as:
    \begin{eqnarray*}
        |x| = |y| & \vee & (|x| = 0 \wedge |y| \equiv 0 \pmod{2}) 
                  \vee (|y| = 0 \wedge |x| \equiv 0 \pmod{2}) \\
                  & \vee & (|x|,|y| > 0 \wedge \gcd(|x|+2,|y|+2) > 1)
    \end{eqnarray*}
\end{lemma}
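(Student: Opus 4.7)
My plan is to prove both inclusions between $\LEN(xaby = yabx)$ and the set defined by $\varphi$. The central tool is the classical theorem that two nonempty words $p, w$ satisfy $pw = wp$ if and only if they are powers of a common primitive word $r$.

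For the forward direction, fix a solution $\sigma$ and set $u = \sigma(x)$, $v = \sigma(y)$, so that $u\,ab\,v = v\,ab\,u$. I split on the relation between $|u|$ and $|v|$. If $|u| = |v|$, comparing prefixes of length $|u|$ on the two sides forces $u = v$, and the first disjunct holds. If $u = \epsilon$, the equation becomes $ab\,v = v\,ab$, so by the commutation theorem $v \in (ab)^*$ (since $ab$ is primitive), hence $|v|$ is even; the case $v = \epsilon$ is symmetric. The remaining case is $|u|,|v| > 0$ with, say, $|u| < |v|$ (the reverse is symmetric). Comparing prefixes forces $u$ to be a prefix of $v$, so $v = uw$ with $w \neq \epsilon$; substituting and cancelling $u$ on the left yields $(ab\,u)\,w = w\,(ab\,u)$. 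The commutation theorem gives a primitive $r$ with $ab\,u = r^i$ and $w = r^j$ for some $i,j \geq 1$. Since $r$ is a prefix of $r^i = ab\,u$, it begins with the two distinct letters $a, b$, so it cannot consist of a single repeated letter; hence $|r| \geq 2$. Since $|r|$ divides both $|u|+2$ and $|v|-|u|$, it divides $\gcd(|u|+2,|v|+2)$, giving $\gcd(|u|+2,|v|+2) \geq 2$.

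For the backward direction I construct an explicit solution in each case. If $m = n$, take $u = v = a^m$. If $m = 0$ with $n$ even (and symmetrically for $n = 0$), take $u = \epsilon$ and $v = (ab)^{n/2}$. If $m, n > 0$ and $d := \gcd(m+2, n+2) \geq 2$, assume WLOG $m \leq n$ and choose any word $r$ of length $d$ beginning with $ab$ (for example $r = ab\,a^{d-2}$). Let $p := r^{(m+2)/d}$, so $|p| = m+2$ and $p$ begins with $ab$; write $p = ab\,u$, giving $|u| = m$. Set $w := r^{(n-m)/d}$ and $v := uw$, so $|v| = n$. Then
\[
    u\,ab\,v \;=\; u \cdot (ab\,u) \cdot w \;=\; u\,p\,w \;=\; u\,r^{(m+2)/d + (n-m)/d} \;=\; u\,w\,p \;=\; v\,ab\,u,
\]
as required.

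The only delicate point is the argument that the common primitive root $r$ in the forward direction must satisfy $|r| \geq 2$: this is what converts a bare commutation statement into the gcd condition, and it is also what makes the backward construction go through (the length-$d$ building block $r$ can be chosen to begin with $ab$ precisely because $d \geq 2$). The boundary cases where $u$ or $v$ is empty are handled separately but again reduce to commutation with $ab$.
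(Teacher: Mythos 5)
Your proof is correct, and both inclusions check out: in the forward direction the reduction of $uabv=vabu$ (for $0<|u|<|v|$) to the commutation $(abu)w=w(abu)$ with $v=uw$ is valid, the primitive root $r$ must have $|r|\geq 2$ because $r$ is a prefix of $abu$, and $|r|$ dividing both $|u|+2$ and $|v|-|u|$ does give $\gcd(|u|+2,|v|+2)>1$ since $\gcd(|u|+2,|v|+2)=\gcd(|u|+2,|v|-|u|)$; the backward construction with $r=ab\,a^{d-2}$ is also sound. However, your route differs from the paper's. The paper proves the forward inclusion by a minimal-counterexample descent: assuming $u,v$ is a shortest solution violating $\varphi$, it strips the length-$(|v|+2)$ prefix from $u$ (after ruling out $|u|=|v|+1$ by a letter-position clash) to obtain a strictly shorter solution $u',v$ with the same coprimality property, i.e., a Euclidean-subtraction argument that is really one step of the Nielsen transformation --- deliberately so, since it foreshadows the counter-system reduction that drives the rest of the paper. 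Your argument instead invokes the Lyndon--Sch\"utzenberger commutation theorem once, after a single cancellation, and extracts the common period $|r|\geq 2$ directly. What you gain is brevity and transparency: the same word $r$ that witnesses the period in the forward direction is the building block of your backward construction, so the two halves visibly mirror each other, and you avoid the paper's side issue of excluding $|u|=|v|+1$. What the paper's version buys is self-containedness (no appeal to the commutation theorem) and a structural preview of the descent that the counter system $\CA(E)$ formalizes in Section~4. The backward constructions are essentially the same up to conjugation of the period word ($(a^{d-1}b)^{*}a^{d-2}$ versus powers of $ab\,a^{d-2}$).
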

Observe that this would imply non-Presburger-definability: for otherwise, since 
the first three disjuncts are Presburger-definable, the last disjunct would also
be Presburger-definable, which is not the case since the property that two 
numbers are relatively prime is not Presburger-definable.
Let us prove this lemma. Let $S = \LEN(xaby = yabx)$. We first show that given
any numbers $n_x, n_y$ satisfying $\varphi(n_x,n_y)$, there are
solutions $\soln$ to $xaby = yabx$ with $|\soln(x)| = n_x$ and $|\soln(y)|=n_y$.
If they satisfy the first disjunct in $\varphi$ (i.e.,  $n_x = n_y$), 
then set $\soln(x) = \soln(y)$ to an arbitrary word $w \in
\ialphabet^{n_x}$. If they satisfy the second disjunct, then $aby = yab$
and so set $\soln(x) = \epsilon$ and $\soln(y) \in (ab)^*$. The same goes with the
third disjunct, symmetrically. For the fourth disjunct (assuming the first three disjuncts
are false), let $d = \gcd(n_x+2,n_y+2)$. 
Define $\soln(x), \soln(y) \in (a^{d-1}b)^*(a^{d-2})$ so that
$|\soln(\alpha)| = n_\alpha$ for $\alpha \in \Vars$. It follows that
$\soln(x)ab\soln(y) = \soln(y)ab\soln(x)$.

We now prove the converse. So, we are given a solution $\soln$ to $xaby = yabx$
and let $u := \soln(x)$, $v := \soln(y)$. Assume to the contrary that 
$\varphi(|u|,|v|)$ is false and that $u$ and $v$ are the shortest such 
solutions. We have several cases to consider:
\begin{itemize}
    \item $u = v$. Then, $|u| = |v|$, contradicting that $\varphi(|u|,|v|)$
        is false.
    \item $u = \epsilon$. Then, $abv = vab$ and so $v \in (ab)^*$, which implies
        that $|v| \equiv 0 \pmod{2}$. Contradicting that $\varphi(|u|,|v|)$
        is false.
    \item $v = \epsilon$. Same as previous item and that $|u| \equiv 0
        \pmod{2}$.
    \item $|u| > |v| > 0$. Since $\varphi(|u|,|v|)$ is false, we have
        $\gcd(|u|+2,|v|+2) = 1$.
        It cannot be the case that $|u| = |v|+1$ since then,
        comparing prefixes of $uabv = vabu$, the letter at position
        $|u|+2$ would be $b$ on l.h.s. and $a$ on r.h.s., which is a
        contradiction. Therefore $|u| \geq |v|+2$. 
        Let $u' = u[|v|+3,|u|]$, i.e., $u$ but with its prefix of length
        $|v|+2$ removed. By Nielsen transformation, we have
        $u'abv = vabu'$. It cannot be the case that $u' = \epsilon$; for,
        otherwise, $abv = vab$ implies $v\in (ab)^*$ and so $u = vab$, implying
        that 2 divides both $|u|+2$ and $|v|+2$, contradicting 
        that $\gcd(|u|+2,|v|+2) = 1$. Therefore, $|u'| > 0$. Since
        $\gcd(|u'|+2,|v|+2) = \gcd(|u|+2,|v|+2) = 1$, we have a shorter
        solution to $xaby = yabx$, contradicting minimality.
    \item $|v| > |u| > 0$. Same as previous item.
\end{itemize}

%
\OMIT{
we assume that $|\soln(x)| > 1$ (otherwise, it is
trivially true).
Next we can show by induction on $n \in \N$ that if 
$n|\soln(x)| \leq |\soln(z)| < (n+1)|\soln(x)|$, then $|\soln(z)| =
n|\soln(x)|$. The base case is when $n = 0$, which is trivial. So, assume that
$n > 0$. In this case, $\soln(z) = \soln(x)u$ for some $u

let us assume that $|\soln(x)| > 1$ and
$|\soln(z)| > 0$ (otherwise, it
is trivially true). Then, since $\soln(x)[1] = \#$ and $\soln(x)[2] \in (a+b)$,
matching the leftmost letter on both sides of $\soln(xz) = 
\soln(zy)$, we have $\soln(z) \geq 2$, $\soln(z)[1] = \#$ and 
$\soln(z)[2] \in (a+b)$. Since $\soln(y)[1] = \#$ and $\soln(x)[1,|\soln(x)|] 
\in (a+b)^*$, it must be the case that $|\soln(z)| \geq |\soln(x)|$. 
$|\soln(x) = \

We will then
}

\section{Reduction to Counter Systems}
\label{sec:reduce}
In this section, we will provide an algorithm for computing a counter system
from $(E,S)$, where $E$ is a quadratic word equation and $S$ is a set of regular
constraints. We will first describe this algorithm for the case without
regular constraints, after which we show the extension to the case with
regular constraints.

Given the quadratic word equation $E$, we show how to compute a counter 
system $\CA(E) = (\counters,\controls,\transrel)$ such that the following 
theorem holds.

\begin{theorem}
  \label{th:reduce}
    The length abstraction of $E$ coincides with 
 \[
\set{v\in\N^{|\Vars|}\mid (E, v) \in \pre_{\CA(E)}^*(\{\epsilon = \epsilon\} \times \N^{|\Vars|})}
\]
\end{theorem}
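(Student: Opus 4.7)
The plan is to set up $\CA(E)$ so that runs from $(E,v)$ to some $(\epsilon=\epsilon, v')$ correspond to Nielsen derivations that witness solutions of $E$ whose length tuple is $v$; the theorem then follows from Proposition~\ref{prop:all-soln}, which guarantees that every solution is witnessed by a successful Nielsen derivation and vice versa. Concretely, I would take $\controls$ to be the (finite) set of quadratic equations reachable from $E$ under $\To^*$ (finiteness holds because Nielsen transformation preserves quadraticity and never increases equation length) and set $\counters = \Vars$. For each rewriting $E_1 \To E_2$ I would introduce a transition $(E_1,\Phi,E_2)$ whose guard $\Phi(\bar\ctr,\bar\ctr')$ encodes the induced length update: erasing an empty prefix variable $\alpha$ imposes $|\alpha|=0$ with the other counters unchanged; (P1) imposes $|\alpha|\geq 1$ when $\alpha\in\Vars$ with no counter change; (P2) with $\alpha\in\ialphabet$, $\beta\in\Vars$ imposes $|\beta|\geq 1$ and sets $|\beta|'=|\beta|-1$; (P3) is symmetric; and (P4) under the guess $\alpha\preceq\beta$ imposes $1\leq|\alpha|\leq|\beta|$ with $|\beta|'=|\beta|-|\alpha|$, with a symmetric transition for the opposite guess.

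For the forward direction, given a solution $\soln$ with length tuple $v$, I would invoke Proposition~\ref{prop:all-soln} to obtain a derivation $E[\soln]\To^*(\epsilon=\epsilon)[\emptyset]$, extract the assignments $\soln_0,\ldots,\soln_n$, and set $v_i=(|\soln_i(x_1)|,\ldots,|\soln_i(x_k)|)$, with the convention that a variable that has left the domain contributes the last live counter value recorded by the transitions above (in particular $0$ after an erasing rule). A routine case split on the rule used at step $i$ shows that each guard $\Phi$ is satisfied by $(v_i,v_{i+1})$, giving a run $(E,v)\to^*(\epsilon=\epsilon,v_n)$ in $\CA(E)$.

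For the backward direction, given a run $(E,v_0)\to\cdots\to(\epsilon=\epsilon,v_n)$, I would reconstruct assignments by reverse induction: take $\soln_n$ to be the empty assignment and, at each backward step, use the shape of the Nielsen rule and the counter values to extend $\soln_{i+1}$ into $\soln_i$. For example, in (P4) with $\alpha\preceq\beta$ one sets $\soln_i(\beta)=\soln_i(\alpha)\cdot\soln_{i+1}(\beta)$, reading $\soln_i(\alpha)$ off from $\soln_{i+1}$ when $\alpha$ still occurs in $E_{i+1}$ and otherwise choosing any word of length $v_i[\alpha]$. An invariance argument along the derivation --- the same one that underlies the soundness of Nielsen transformation --- shows that $\soln_0$ is a solution of $E$ with length tuple $v_0$.

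The principal obstacle is in the backward step: because substitutions such as $\beta\mapsto\alpha\beta$ change the operational meaning of the symbol $\beta$ across the derivation, and because a variable may disappear from the current equation while its counter retains a nonzero value (for instance when its only remaining occurrence is consumed by (P4)), one has to argue that the reconstruction stays consistent with \emph{every} counter value at every step. I expect this to be a careful bookkeeping exercise rather than a conceptual hurdle, essentially the standard completeness proof for Nielsen transformation lifted to track lengths explicitly.
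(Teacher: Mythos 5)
Your proposal is correct and follows essentially the same route as the paper: the paper constructs $\CA(E)$ with exactly the guards you describe (the formulas $\ID$, $\DEC$, $\SUB$ attached to rules (r1), (r2)) and then derives Theorem~\ref{th:reduce} directly from Proposition~\ref{prop:all-soln}, which is precisely your strategy. The paper in fact gives no more detail than ``immediately follows from Proposition~\ref{prop:all-soln}'', so your explicit treatment of the two directions and of the bookkeeping for variables whose counters outlive their occurrences in the equation is a faithful (and somewhat more careful) elaboration of the intended argument.
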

Before defining $\CA(E)$, we define some notation.
Define the following formulas:
\begin{itemize}
\item $\ID(\bar x,\bar x') := \bigwedge_{x \in \bar x} x' = x$
\item $\SUB_{y,z}(\bar x,\bar x') := \text{\lmcschanged{$0 <$}} z \leq y \wedge y' = y - z \wedge
        \bigwedge_{x \in \bar x, x \neq y} x' = x$
\item $\DEC_{y}(\bar x,\bar x') := y > 0 \wedge y' = y - 1 \wedge 
        \bigwedge_{x \in \bar x, x \neq y} x' = x$
\end{itemize}
Note that the $\neq$ symbol in the guard of $\bigwedge$ denotes syntactic
equality (i.e. not equality in Presburger Arithmetic). We omit mention of the
free variables $\bar x$ and $\bar x'$ when they are clear from the context.

We now define the counter system. 
Given a quadratic word equation $E$ with constants $\ialphabet$ and variables
$\Vars$, we define a counter system 
$\CA(E) = (\counters,\controls,\transrel)$ as follows. 
The counters $\counters$ will be precisely all variables that appear in $E$, i.e., $\counters := \Vars$.
The control states are precisely all equations $E'$ that can be rewritten
from $E$ using Nielsen transformation, i.e., $\controls := \{ E' : E \To^* E'
\}$. The set $\controls$ is finite (at most exponential in $|E|$) as per our 
discussion in the previous section. 

We now define the transition relation $\transrel$. 
We use $\bar x$ to enumerate $\Vars$ in some order.
Given $E_1 \To E_2$ with $E_1,E_2 \in \controls$, we then
add the transition $(E_1,\Phi(\bar x,\bar x'),E_2)$, where $\Phi$ is defined as
follows:
\begin{description}
    \item[(r1)] If $E_1 \To E_2$ applies a rule for erasing an empty prefix variable
        $y \in \bar x$, then $\Phi := y = 0 \wedge \ID$.
    \item[(r2)] If $E_1 \To E_2$ applies a rule for removing a nonempty prefix:
        \begin{itemize}
            \item If (P1) is applied, then $\Phi = \ID$.
            \item If (P2) is applied, then $\Phi = \DEC_{\beta}$.
            \item If (P3) is applied, then $\Phi = \DEC_{\alpha}$.
            \item If (P4) is applied and $\alpha \preceq \beta$, then $\Phi = 
                \SUB_{\beta,\alpha}$. If $\beta \preceq \alpha$, then
                $\Phi = \SUB_{\alpha,\beta}$.
        \end{itemize}
\end{description}
\begin{lemma}
    The counter system $\CA(E)$ terminates from every configuration
    $(E_0,\vecV_0)$.
\end{lemma}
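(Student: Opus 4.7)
The plan is to exhibit a well-founded ranking function on configurations that strictly decreases along every transition of $\CA(E)$. Concretely, to each configuration $(E_i, \vecV_i)$ I would assign the pair
\[
\mu(E_i, \vecV_i) = \bigl(|\vecV_i|_1,\ |E_i|\bigr) \in \N \times \N,
\]
where $|\vecV_i|_1 = \sum_{x \in \Vars} \vecV_i(x)$ is the sum of the counter values and $|E_i|$ is the length of the word equation $E_i$ in the control state. I would then order pairs by the usual lexicographic order on $\N \times \N$, which is well-founded, and show that every step of $\CA(E)$ strictly decreases $\mu$ in this order.

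The proof would proceed by case analysis on the transition $(E_1, \Phi, E_2) \in \transrel$, following the definitions in (r1) and (r2). For non-$\ID$ transitions I would show that the first component $|\vecV|_1$ strictly decreases: every $\DEC_y$ guard requires $y > 0$ and sets $y' = y - 1$, so the sum drops by exactly one, while every $\SUB_{y,z}$ guard requires $0 < z \leq y$ and sets $y' = y - z$, so the sum drops by at least one. For $\ID$ transitions, which arise either from rule (P1) of the Nielsen transformation or from the rule for erasing an empty prefix variable, $|\vecV|_1$ is preserved, and I would instead show that $|E|$ strictly decreases: rule (P1) strips off one common leading symbol from each side, shrinking the equation by exactly two, and the empty-prefix-erasing rule replaces every occurrence of the guessed-empty variable $y$ by $\epsilon$ (certainly at the leading position, and possibly also at the other occurrence if $y$ is quadratic), strictly decreasing $|E|$ by at least one.

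Combining the two cases, $\mu$ strictly decreases in the lexicographic order at every step, so well-foundedness immediately forbids any infinite trajectory from $(E_0, \vecV_0)$, yielding termination. I do not anticipate any serious obstacle; the only mildly delicate point is the syntactic verification that every $\ID$ transition truly shrinks the underlying equation, which is purely a matter of reading off the substitution semantics of the corresponding Nielsen rules and using the fact that the leading symbol of $E_1$ is always consumed.
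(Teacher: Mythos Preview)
Your proposal is correct and takes essentially the same approach as the paper: both arguments observe that the counter vector never increases along a transition, that it strictly decreases on every non-$\ID$ step, and that on $\ID$ steps (exactly rules (r1) and (r2,P1)) the equation length strictly decreases. The only cosmetic difference is the choice of ranking function---you use the lexicographic order on $(\|\vecV\|_1,|E|)$, while the paper uses the single scalar $|E|+\|\vecV\|$ (which additionally relies on $|E|$ never increasing in the $\DEC$/$\SUB$ cases); both yield the same termination bound and the same case analysis.
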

\lmcschanged{
\begin{proof}
    By checking each transition rule defined for $\CA(E)$,
    it is easy to verify that if $(E_1,\vecV_1) \to (E_2,\vecV_2)$, then $|E_1| 
    \leq |E_2|$ and $\vecV_1 \preceq \vecV_2$. In addition, if $\vecV_1 = 
    \vecV_2$, then $|E_1| < |E_2|$. Indeed, $\vecV_1 = \vecV_2$ could hold
    only in the case of applying (r1) or (r2,P1); the other rules would
    guarantee that $\vecV_2 \prec \vecV_1$ because of the nonzero assumption for
    the decrementing counter. It is clear that (r1) and (r2,P1) would ensure
    that $|E_2| < |E_1|$. This guarantees that it takes at most $|E_0| +
    \|\vecV_0\|$ steps before $\CA(E)$ terminates from $(E_0,\vecV_0)$, where 
    $\|\vecV\|$ denotes the sum of all the components in $\vecV$.
    This completes the proof of the lemma.
\end{proof}
}
The proof of Theorem \ref{th:reduce} immediately follows from Proposition
\ref{prop:all-soln} that Nielsen transformation generates all solutions. 
We illustrate how such a counter system is constructed on the simple example
$xy = yz$ in Figure \ref{fig:nielsen_ex3}.
    \begin{figure}
    \includegraphics[width=0.8\textwidth,angle = 0]{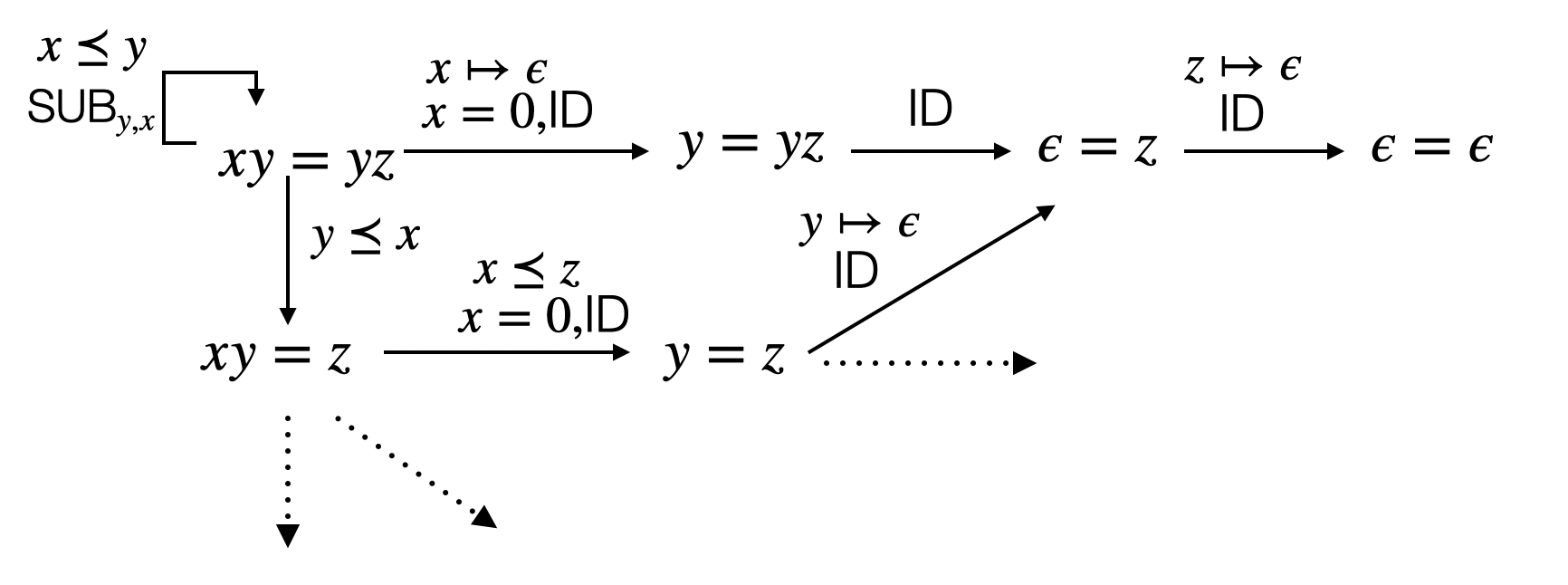}
        \caption{A depiction of parts of $\CA(xy=yz)$. \label{fig:nielsen_ex3}}
    \end{figure}

\subsubsection*{Extension to the case with regular constraints: } 
\OMIT{
As for Nielsen transformation with to quadratic word equations with regular 
constraints, the extension of our reduction is only needed for applications of 
rules (P2)--(P4). 
}
In this extension, we will only need to assert that \emph{initial} counter 
values belong to the length abstractions of the regular constraints, which
are effectively semilinear due to Parikh's Theorem \cite{Parikh}. 
Given a quadratic word equation $E$ with a set $S$ of regular constraints,
we define the counter system $\CA(E,S) = (\counters,\controls,\transrel)$
as follows.
Let $\CA(E) = (\counters_1,\controls_1,\transrel_1)$ be the counter system from the previous
paragraph, obtained by ignoring the regular constraints. 
We define $\counters = \counters_1$. 
Let $\controls$ be the finite set of all configurations reachable from 
some $(E,f)$ where $f$ is a monoid element consistent with $S$, i.e., 
$\controls = \{ (E',f') : (E,f) \To^* (E',f') \}$. Given 
$(E_1,f_1) \To (E_2,f_2)$, we add the transition $((E_1,f_2),\Phi(\bar x,\bar
x'),(E_2,f_2))$ to $\CA(E,S)$ if $(E_1,\Phi(\bar x,\bar x'),E_2)$
was added to $\transrel_1$ by $E_1 \To E_2$. 
The following theorem generalises Theorem \ref{th:reduce} by asserting that the
initial counter values belong to the length abstractions of a monoid element
consistent with the regular constraints.
            \OMIT{
\[
    \Phi := \Phi' \wedge 
            \bigwedge_{x \in \bar x} \left( x \in 
             \LEN(\phi^{-1}(f_1(x)))
             \wedge
                x' \in \LEN(\phi^{-1}(f_2(x')))
                \right).
            \]
The size of the NFA for 
$\phi^{-1}(M)$ is exponential in the total number of states in automata
in the set $S$ of regular constraints.
%
The constraint $x \in \LEN(L)$ is well-known to be effectively semilinear 
\cite{Parikh}.
In fact, 
using the algorithm of Chrobak-Martinez \cite{Chrobak,Martinez,IPL09},
we can compute 
in polynomial time two finite sets $A, A'$ of integers and an integer $b$
such that,
for each $n \in \N$, $n \in U := A \cup (A'+b\N)$ is true iff 
$n \in \LEN(L)$. Note that $U$ is a finite union of arithmetic progressions
(with period 0 and/or $b$).
In fact, each number $a \in A \cup A'$ (resp.~the number $b$)
is at most quadratic
in the size of the NFA for $\phi^{-1}(M)$, and so it is a polynomial
size even when they are written in unary. 
Therefore, treating $U$ as an existential Presburger formula
$\varphi(x)$ with one free variable (an existential quantifier is needed
to guess the coefficient $n$ such that $x = a_i+bn$ for some $i$),
the resulting existential Presburger formula $\Phi'$ is polynomial in the size
of the NFA for the inverse language of a monoid element $M$ appearing in
the transition.
}

\begin{theorem}
    The length abstraction of $(E,S)$ coincides with 
    \[
        \bigcup_f
        \left( 
\set{v\in\N^{\Vars}\mid ((E, f), v) \in \pre_{\CA(E,S)}^*(\{(\epsilon =
    \epsilon,\emptyset)\} \times \N^{\Vars})} 
            \cap 
            \LEN_{\bar x}(\phi^{-1}(f))
    \right),
\]
    where $f$ ranges over partial functions mapping $\Vars$ to $\B^{r\times r}$ 
    that are consistent with $S$, and
            $\LEN_{\bar x}(\phi^{-1}(f))$ is a shorthand for
            $\{ v \in \N^{\Vars} \mid v(x) \in \LEN(\phi^{-1}(f(x))),
            \text{\lmcschanged{for all $x \in \Vars$}}\}$.
    \label{th:reduce2}
\end{theorem}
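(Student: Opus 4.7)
The plan is to mirror the proof of Theorem \ref{th:reduce}, using Proposition \ref{prop:all-soln} (which already applies to the regular-constraint variant of Nielsen transformation) together with careful bookkeeping of the auxiliary function $f$ tracking monoid elements.

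For the ($\supseteq$) direction, suppose $v \in \N^{\Vars}$ belongs to the right-hand side via some $f$ consistent with $S$. By the definition of $\LEN_{\bar x}(\phi^{-1}(f))$, one has $v(x) \in \LEN(\phi^{-1}(f(x)))$ for every $x \in \Vars$, so I can pick, for each $x$, a word $w_x \in \ialphabet^{v(x)}$ with $\phi(w_x) = f(x)$ and set $\soln(x) := w_x$. Membership of $v$ in $\pre^*_{\CA(E,S)}$ yields a run $((E,f),v) \to \cdots \to ((\epsilon = \epsilon, \emptyset), v')$, and by construction of $\CA(E,S)$ each transition lifts to a Nielsen step $(E_i,f_i) \To (E_{i+1},f_{i+1})$. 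Because $\soln$ is consistent with the monoid labels along this path, I can then lift the sequence to the annotated version $(E,f)[\soln] \To^* (\epsilon=\epsilon,\emptyset)[\soln']$. Proposition \ref{prop:all-soln} then tells us that $\soln$ is a solution to $(E,S)$ whose length tuple is exactly $v$.

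For the ($\subseteq$) direction, suppose $\soln$ is a solution to $(E,S)$. Define $f(x) := \phi(\soln(x))$ for every $x \in \Vars$ occurring in $E$. Then $f$ is consistent with $S$ because $\soln$ satisfies every regular constraint in $S$, and the tuple $v := (|\soln(x)|)_{x \in \Vars}$ sits in $\LEN_{\bar x}(\phi^{-1}(f))$ by the very definition of $f$. Proposition \ref{prop:all-soln} supplies an annotated Nielsen sequence $(E,f)[\soln] \To^* (\epsilon = \epsilon, \emptyset)[\soln']$, which projects onto a path in $\CA(E,S)$ from $((E,f),v)$ to $((\epsilon = \epsilon, \emptyset), v')$. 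Hence $v$ lies in the $\pre^*_{\CA(E,S)}$ set, completing the containment.

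The main obstacle I anticipate is the bookkeeping that ensures the monoid-element tracking on the transformation side is exactly mirrored by the counter transitions, in both directions. Concretely, one must verify that when rules (P2)--(P4) are applied to an annotated configuration, the guessed matrix $M_{\beta'}$ in the regular-constraint definition of $\To$ can always be chosen as $\phi(\soln'(\beta))$ where $\soln'$ is the updated assignment; conversely, every matrix choice arising along an abstract path from $(E,f)$ must be realisable by some assignment having the prescribed length tuple, which is where the hypothesis $v \in \LEN_{\bar x}(\phi^{-1}(f))$ is crucial. Once this compatibility is established, the guards $\ID$, $\SUB_{y,z}$, and $\DEC_y$ behave exactly as in Theorem \ref{th:reduce}, and the argument goes through unchanged.
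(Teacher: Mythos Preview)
Your overall strategy matches the paper's: both simply appeal to Proposition~\ref{prop:all-soln}. The $(\subseteq)$ direction is fine. However, your $(\supseteq)$ argument has a real gap. You choose words $w_x$ with $|w_x|=v(x)$ and $\phi(w_x)=f(x)$ \emph{up front}, set $\soln(x)=w_x$, and then claim the counter run lifts to an annotated Nielsen derivation $(E,f)[\soln]\To^*(\epsilon=\epsilon,\emptyset)[\soln']$. But nothing guarantees that this $\soln$ is a solution of $E$ at all, let alone one whose associated Nielsen derivation follows the particular sequence $(E_i,f_i)$ fixed by the counter run. For instance, already without regular constraints, take $E:\ xa=ax$: the counter system reaches $(\epsilon=\epsilon)$ from $(E,n)$ for every $n$, so with your recipe you might pick $\soln(x)=b^n$, which has the right length but is not a solution and cannot be annotated along the given path.

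The fix is to build $\soln$ by \emph{backward induction along the chosen Nielsen path}, not by an a priori choice. Starting from the empty assignment at $(\epsilon=\epsilon,\emptyset)$ and walking back, each step either prepends a known letter or a previously constructed word (rules (P2)--(P4)), or introduces a fresh $\soln_i(\alpha)$ when $\alpha$ leaves the equation (erasure, (P1), or (P4) with $\alpha$ occurring only once). A direct check shows this yields a solution of $E_i$ with $|\soln_i(x)|=v_i(x)$ and $\phi(\soln_i(x))=f_i(x)$, and at the free-choice points the required word exists precisely because the erasure rule forces $v_i(\alpha)=0$ with $f_i(\alpha)=I$, while the remaining cases are covered by the hypothesis $v_0\in\LEN_{\bar x}(\phi^{-1}(f_0))$ propagated through the monoid equations defining $f_{i+1}$ from $f_i$. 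The paper's proof is itself a one-line appeal to Proposition~\ref{prop:all-soln} and does not spell this out either, but your explicit forward construction is the wrong way to unpack it.
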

As for the case without regular constraints, the proof of Theorem 
\ref{th:reduce} immediately follows from Proposition
\ref{prop:all-soln} that Nielsen transformation generates all solutions. 
As previously mentioned, $\LEN_{\bar x}(\phi^{-1}(f))$ is semilinear by Parikh's
Theorem \cite{Parikh}. Unlike Theorem \ref{th:reduce}, however, 
Theorem \ref{th:reduce2} is not achieved by a polynomial-time reduction 
for two reasons. Firstly, there could be exponentially many functions $f$ that 
are consistent with $S$, and that checking this consistency is PSPACE-complete.
Secondly, 
the size of the NFA for
$\phi^{-1}(f(x))$ is exponential in the total number of states in automata in
the set $S$ of regular constraints. It turns out that this reduction can be made
to run in polynomial-space. Enumerating the candidate function $f$ can be done
in polynomial space, as previously remarked. 
\OMIT{
It remains to show that
there is a polynomial space algorithm that can enumerate polynomial-size linear 
sets for $\LEN_{\bar x}(\phi^{-1}(f))$.
}

\begin{lemma}
    There is a polynomial-space algorithm which enumerates linear sets 
    corresponding to $\LEN_{\bar x}(\phi^{-1}(f))$.
    \label{lm:parikh_pspace}
\end{lemma}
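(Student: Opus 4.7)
The plan is to handle the enumeration per-variable and then combine via Cartesian products. Observe that $\LEN_{\bar x}(\phi^{-1}(f)) = \prod_{x \in \Vars} \LEN(\phi^{-1}(f(x)))$ as a subset of $\N^{|\Vars|}$, so any one-dimensional linear set chosen for each factor yields a linear set of the product (with offset vector assembling the per-coordinate offsets and period vectors being scaled unit vectors). Since there are polynomially many variables, this combining step preserves polynomial space: enumerate linear sets for each factor one at a time (reusing the work-tape space), then write all combinations to the output tape on the fly.

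For a single variable $x$, note that $\phi^{-1}(f(x))$ is recognized by an \emph{implicit} NFA $\AutB_x$ whose states are realisable characteristic matrices in $\B^{r\times r}$, with initial state $I$, unique accepting state $f(x)$, and a transition $M \to M \cdot \phi(a)$ for each $a \in \ialphabet$. This NFA has at most $N := 2^{r^2}$ states, each of polynomial bit-length. Because only the length matters, $\LEN(\phi^{-1}(f(x)))$ is the set of lengths of accepting paths in $\AutB_x$, i.e. the Parikh image of a unary NFA. By classical results on unary NFAs (Chrobak normal form), this set can be written as a finite union of arithmetic progressions $\{c + kp : k \geq 0\}$ with $c,p$ bounded by $O(N^2) = 2^{O(r^2)}$, hence of bit-length polynomial in the input.

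The algorithm then exhaustively enumerates all pairs $(c,p)$ of polynomial bit-length (both bounded by $2^{O(r^2)}$) and verifies, for each pair, that $\{c+kp : k \geq 0\} \subseteq \LEN(\phi^{-1}(f(x)))$. The verification is nondeterministic: guess a witness state $q \in \B^{r\times r}$ and a decomposition $c = c_1 + c_2$, then check (i) a path of length $c_1$ from $I$ to $q$, (ii) a cycle of length $p$ at $q$, and (iii) a path of length $c_2$ from $q$ to $f(x)$. Each bounded-length reachability query in $\AutB_x$ is handled in NPSPACE by guessing intermediate matrices one at a time while maintaining a binary counter of polynomial bit-length; Savitch's theorem converts this to PSPACE. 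Singletons are captured by the degenerate case $p=0$, which reduces to a single reachability query.

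The main obstacle is establishing \emph{completeness} — that the union of the progressions passing verification equals $\LEN(\phi^{-1}(f(x)))$ rather than a strict subset. Soundness is immediate from the verification procedure. For completeness, I appeal to the unary-automaton structure theorem: there exists a decomposition of $\LEN(\phi^{-1}(f(x)))$ into arithmetic progressions with precisely the bit-length bounds enumerated by the algorithm, so each such progression is valid and hence emitted; finite exceptions in the tail are covered by the singleton case. Combined with the Cartesian product step of the first paragraph, this yields a polynomial-space enumeration of linear sets whose union is $\LEN_{\bar x}(\phi^{-1}(f))$.
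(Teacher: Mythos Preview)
Your approach differs from the paper's: the paper invokes the Kopczynski--To Parikh-image machinery for intersections of NFAs (observing that $\phi^{-1}(f(x))$ is an intersection of polynomially many small NFAs), whereas you exploit that only lengths matter, build a single implicit exponential-size automaton over boolean matrices, and treat it as a unary NFA via Chrobak's structure theorem. The Cartesian-product reduction to one variable at a time is clean and correct, and the implicit-automaton construction is fine.

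There is, however, a genuine gap in your completeness argument. You write that Chrobak's decomposition yields progressions within the enumerated bit-length bounds, ``so each such progression is valid and hence emitted.'' But your verification does not test the subset inclusion $\{c+kp\}\subseteq L$; it tests the strictly stronger structural property that some state $q$ carries a cycle of length \emph{exactly} $p$ together with paths of lengths summing to $c$. The periods in Chrobak's decomposition are gcds of cycle lengths, which need not be realised by any single cycle. Concretely, take the matrix automaton to be a graph with a $4$-cycle and a $6$-cycle through the initial/final state and no other edges: the length set is $\{0\}\cup\{4,6,8,10,\dots\}$, Chrobak's progression has period $2$, yet no state has a $2$-cycle, so your verification rejects $(c,p)=(4,2)$. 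In this example your algorithm is still complete (the progressions $\{0+4k\}$ and $\{6+4k\}$ together cover everything), but the reason is not the one you give.

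To close the gap you must argue directly that the union of progressions \emph{passing your path--cycle--path test} with $c,p\le N^{O(1)}$ already equals $L$, without routing through Chrobak's specific decomposition. One route is to show that for every $\ell\in L$ there exist a state $q$, a simple cycle length $p\le N$ at $q$, and path lengths $c_1,c_2$ with $c_1+c_2\le O(N^2)$ and $p\mid(\ell-c_1-c_2)$; this amounts to a pumping argument on a single frequently-visited state and needs a little care (one must control both the $s$--$q$ and $q$--$t$ residues simultaneously). Alternatively, replace the verification by an honest PSPACE check of the inclusion $\{c+kp\}\subseteq L$, using that $L$ is ultimately periodic with computable threshold and period of polynomial bit-length; then Chrobak's theorem applies verbatim.
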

This lemma essentially follows from the proof of the result of Kopczynski and 
To \cite{KT10} that one can compute a union of $n^{k^{O(1)}}$ linear sets
each with at most $k$ periods with numbers represented in unary
(with time complexity $n^{k^{O(1)}}$, and polynomial space) representing the 
Parikh image of an NFA with $k$ letters in the alphabet. This result can be
easily adapted to show that, given NFAs $\Aut_1,\ldots,\Aut_n$, one can
enumerate in polynomial space a union of exponentially many linear sets 
of polynomial size 
(with at most $k$ periods and numbers represented in binary) representing
the Parikh image of $\Lang(\Aut_1) \cap \cdots \cap \Lang(\Aut_n)$.
Since $\phi^{-1}(f(x))$ can be represented by an intersection of polynomially
many NFA, Lemma \ref{lm:parikh_pspace} immediately follows.
\OMIT{
Furthermore,
since the size of the NFA for
$\phi^{-1}(f(x))$ is exponential in the total number of states in automata in
the set $S$ of regular constraints, one 

. It turns out that this reduction can be made
}
\anthony{The proof of this theorem is not hard, but I will write it up after
I finish the rest.}

\OMIT{
\begin{itemize}
    \item If $E_1 \To E_2$ applies a rule for erasing an empty variable or 
        (P1), then $\Phi := \Phi \wedge \bigwedge_{(x \in L) \in S} x \in 
        \LEN(L)$.
    \item Suppose $E_1 \To E_2$ applies (P2)--(P4). Then, assuming $E_1 :=
        \alpha w_1 = \beta w_2$ and $E_2 := E_1[\alpha\beta/\beta]$, we
        define 
\end{itemize}
}
 
\section{Decidability via Linear Arithmetic with Divisibility}
\label{sec:decidability}

\subsection{Accelerating a 1-variable-reducing cycle}
Consider a counter system $\CA = (\counters,\controls,\transrel)$ with
$\controls = \{q_0,\ldots,q_{n-1}\}$, such that for some $y \in \counters$
the transition relation $\transrel$ consists of precisely the following
transition $(q_i,\Phi_i,q_{i+1 \pmod{n}})$, for each $i \in 
[n-1]$, and each $\Phi_i$ is either $\SUB_{y,z}$ (with $z$ a variable distinct
from $y$) or $\DEC_y$. Such a counter system is said to be a 
\defn{1-variable-reducing cycle}. 

\begin{lemma}
    There exists a polynomial-time algorithm which given a 1-variable-reducing
    cycle $\CA = (\counters,\controls,\transrel)$ and two states $p,q \in
    \controls$ computes \lmcschanged{a} formula $\varphi_{p,q}(\bar x,\bar x')$ in 
    existential Presburger arithmetic with divisibility such that $(p,\vecV) \to_{\CA}^*
    (q,\vecW)$ iff $\varphi_{p,q}(\vecV,\vecW)$ is satisfiable.
    \label{lm:accelerate}
\end{lemma}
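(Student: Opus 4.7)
The plan is to exploit the rigidity of a 1-variable-reducing cycle: along any run, only the counter $y$ changes, because every transition decrements $y$ either by $1$ (via $\DEC_y$) or by the value of some other counter $z$ (via $\SUB_{y,z}$), and $z$ itself is never modified. Consequently, the effect of a run from $p$ to $q$ is determined entirely by the number $k \ge 0$ of full cycles performed starting from $p$, together with the unique partial segment (of length strictly less than $n$) from $p$ to $q$ along the cycle; and both contributions are linear in $\bar{x}$.

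To make this precise, let $d_i := 1$ if $\Phi_i = \DEC_y$ and $d_i := z$ if $\Phi_i = \SUB_{y,z}$, and set $D := \sum_{i=0}^{n-1} d_i$. For each $p, q \in \controls$ let $D_{p,q}$ be the sum of $d_i$'s along the unique partial segment from $p$ to $q$ along the cycle; let $\Pi_{p,q}$ (resp.\ $\Pi_{\mathrm{cyc}}$) denote the positivity conjunction $\bigwedge_z z > 0$ ranging over the $z$'s occurring in $\SUB_{y,z}$-transitions of that segment (resp.\ of the entire cycle); and let $\mathrm{Frame} := \bigwedge_{x \in \counters \setminus \{y\}} x' = x$ freeze the other counters. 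All of these are linear expressions or conjunctions of polynomial size. The key observation is that once the relevant $z$'s are positive, each $d_i \ge 1$, so $y$ is monotonically decreasing along every run and the intermediate guards $y \ge d_i$ at each step collapse into the single requirement that the \emph{final} value $y'$ be nonnegative.

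A run from $p$ to $q$ has one of two shapes: (A) no full cycle is traversed and only the partial segment from $p$ to $q$ is used; or (B) $k \ge 1$ full cycles from $p$ are performed, followed by the partial segment to $q$. Accordingly, I would take $\varphi_{p,q} := \varphi^A_{p,q} \vee \varphi^B_{p,q}$ with $\varphi^A_{p,q}$ being
\[
    \mathrm{Frame} \wedge \Pi_{p,q} \wedge y' = y - D_{p,q} \wedge y' \ge 0
\]
and $\varphi^B_{p,q}$ being
\[
    \mathrm{Frame} \wedge \Pi_{\mathrm{cyc}} \wedge D \divides (y - y' - D_{p,q}) \wedge y - y' - D_{p,q} \ge D \wedge y' \ge 0.
\]
The divisibility atom in $\varphi^B_{p,q}$ is the only use of PAD beyond plain Presburger arithmetic: it encodes the existence of the number $k$ of full loops, and the inequality $y - y' - D_{p,q} \ge D$ together with $\Pi_{\mathrm{cyc}}$ (which guarantees $D \ge 1$) forces $k \ge 1$. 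The whole formula has polynomial size and is produced in polynomial time.

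The main obstacle I anticipate is the careful verification of the guard-collapse argument and its interaction with positivity of the subtractor counters, together with a uniform treatment of the degenerate case $p = q$ (where shape (A) amounts to the $0$-length segment and shape (B) still captures all looping runs). Beyond this, correctness reduces to a routine case analysis based on the fact that $y$ is strictly decreasing once positivity of the $z$'s is ensured.
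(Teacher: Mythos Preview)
Your approach is essentially the same as the paper's: decompose any run into some number of full traversals of the cycle (captured by a single divisibility atom on the linear expression $D$, which the paper calls $M$) followed by the fixed simple segment from $p$ to $q$, using that only $y$ changes and the other counters act as parameters. The paper's sketch is terser---it writes $\varphi_{p,p}$ as $M \mid (y-y') \wedge y' \leq y \wedge \mathrm{Frame}$ and composes it with the raw conjunction $\Phi_0 \wedge \cdots \wedge \Phi_{i-1}$ for the residual segment, without separating the zero-cycle case or spelling out the positivity guards $z>0$ that you handle via $\Pi_{p,q}$ and $\Pi_{\mathrm{cyc}}$---but the underlying construction and the guard-collapse argument are the same.
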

This lemma can be seen as a special case of the acceleration lemma for flat 
parametric counter automata \cite{BIL09} (where all variables other than $y$ are
treated as parameters). However, its proof is in fact quite simple.
Without loss of generality, we assume that $q = q_0$ and $p = q_i$, for some
\lmcschanged{$i \in [0,n-1]$}.
Any path $(q_0,\vecV) \to_{\CA}^* (q_i,\vecW)$ can be decomposed into 
the cycle $(q_0,\vecV) \to^* (q_0,\vecV')$ and the simple path $(q_0,\vecW_0)
\to \cdots \to (q_i,\vecW_i)$ of length $i$. Therefore, the reachability
relation $(q_0,\vecX) \to_{\CA}^* (q_i,\vecY)$ can be expressed as
\[
    \exists \vecZ_0,\cdots,\vecZ_{i-1}: \varphi_{q_0,q_0}(\vecX,\vecZ_0) \wedge 
                \Phi_0(\vecZ_0,\vecZ_1) \wedge \cdots \wedge
                \Phi_{i-1}(\vecZ_{i-1},\vecY).
            \]
Thus, it suffices to show that $\varphi_{q_0,q_0}(\vecX,\vecX')$ is expressible
in $\PAD$. 
\rupak{CHECK: rewritten}
Consider a linear expression $M = a_0 + \sum_{x\in X\setminus\set{y}} a_x x$,
where $a_0$ is the number of instructions $i$ in the cycle such that $\Phi_i = \DEC_y$
and $a_x$ is the number of instructions $i$ such that $\Phi_i = \SUB_{y,x}$.
Each time around the cycle, $y$ decreases by $M$.
\rupak{END CHANGE}
Thus, for some $n \in \N$ we have $y' = y - nM$, or 
equivalently
\[
    nM = y - y'
\]
The formula $\varphi_{q_0,q_0}$ can be defined as
follows:
\[
    \varphi_{q_0,q_0} :=\ M \mid (y-y') \wedge 
                        y' \leq y \wedge
                            \bigwedge_{x \in X\setminus\{y\}} x' = x.
\]

\subsection{An extension to flat control structures}

The following generalisation to flat control structures is an easy corollary of 
Lemma \ref{lm:accelerate}. 
\lmcschanged{
\begin{theorem}
    There exists an NP algorithm with a PAD oracle for the following problem:
    given a flat Presburger
    counter system $\CA = (\counters,\controls,\transrel)$, each of whose
    simple cycle is 1-variable-reducing 
    two states $p,q \in \controls$, and an existential Presburger formula
    $\psi(\bar x,\bar y)$, decide if
    $(p,\vecV) \to_{\CA}^* (q,\vecW)$ for some $\vecV,\vecW$ such that
    $\psi(\vecV,\vecW)$ is a valid Presburger formula. That is, the problem
    is solvable in time NEXP.
    \label{th:flat}
\end{theorem}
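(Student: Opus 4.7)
The plan is to nondeterministically guess a skeleton of $\CA$ from $p$ to $q$, apply Lemma~\ref{lm:accelerate} to each simple cycle it contains, stitch the resulting $\PAD$ formulas together with the Presburger transition relations on the inter-SCC edges, conjoin with $\psi$, and finally query a $\PAD$ oracle on the resulting existential $\PAD$ formula.

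More concretely, I would first guess a sequence $C_1, \ldots, C_\ell$ of SCCs forming a simple path in $s_\CA$ (so $\ell \leq |\controls|$), together with, for each $i$, an entry state $p_i \in C_i$ and an exit state $q_i \in C_i$ (with $p_1 = p$ and $q_\ell = q$) and, for $i < \ell$, a connecting transition $(q_i, \Phi^{(i)}, p_{i+1}) \in \transrel$. Flatness of $\CA$ ensures that every non-trivial $C_i$ is precisely a simple cycle satisfying the $1$-variable-reducing hypothesis, so the skeleton has size polynomial in $|\CA|$, and by the definition of flatness any $p$-to-$q$ path of $\CA$ visits a unique sequence of SCCs (its own skeleton).

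Next, for each non-trivial $C_i$ I would invoke Lemma~\ref{lm:accelerate} to obtain in polynomial time an existential $\PAD$ formula $\varphi^{(i)}(\bar z, \bar z')$ equivalent to $(p_i, \bar z) \to_{C_i}^* (q_i, \bar z')$; for trivial (single-state) $C_i$ I set $\varphi^{(i)} := \ID$. Introducing fresh tuples $\bar z_1, \ldots, \bar z_{2\ell}$ for the values of the counters at the entry and exit of each SCC, I then form
\[
    \Psi(\bar x, \bar y)\ :=\ \exists \bar z_1, \ldots, \bar z_{2\ell}:\;
    \bar x = \bar z_1 \wedge \bar y = \bar z_{2\ell} \wedge
    \bigwedge_{i=1}^{\ell} \varphi^{(i)}(\bar z_{2i-1}, \bar z_{2i})
    \wedge \bigwedge_{i=1}^{\ell-1} \Phi^{(i)}(\bar z_{2i}, \bar z_{2i+1}),
\]
which has size polynomial in the input, and query the $\PAD$ oracle on the satisfiability of $\Psi(\bar x, \bar y) \wedge \psi(\bar x, \bar y)$.

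The step I expect will need the most care is the bookkeeping around intra-SCC entry and exit points: Lemma~\ref{lm:accelerate} handles reachability within a $1$-variable-reducing cycle between two arbitrary states, but one must explicitly verify the degenerate case in which the path enters an SCC and leaves it without completing any full loop, which is absorbed into the ``$n \in \N$'' in the acceleration argument by allowing $n = 0$. Once this is in place, the construction is a polynomial-time nondeterministic reduction to existential $\PAD$, which gives the NP-with-$\PAD$-oracle upper bound; the NEXP time bound then follows by substituting the NEXP upper bound for existential $\PAD$ from \cite{LOW15}.
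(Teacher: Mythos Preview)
Your proposal is correct and follows essentially the same approach as the paper: nondeterministically guess a skeleton (entry/exit states in each SCC together with the inter-SCC transitions), accelerate each simple cycle via Lemma~\ref{lm:accelerate}, and conjoin the resulting $\PAD$ formulas with the inter-SCC transition formulas and $\psi$ before querying a $\PAD$ oracle. Your treatment is in fact slightly more careful than the paper's, in that you explicitly separate out the trivial (singleton) SCCs and flag the degenerate $n=0$ case in the acceleration.
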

}
\lmcschanged{
\begin{proof}
    Consider the control graph $G = (V,E)$ of $\CA$. To witness 
    $(p,\vecV) \to_{\CA}^* (q,\vecW)$ for some $\vecV,\vecW$ such that
    $\psi(\vecV,\vecW)$ is true, it is sufficient and necessary that the
    execution of $\CA$ corresponds to a skeleton of $G$ from $p$ to $q$ with a 
    specific entry point (together with a specific incoming transition) and a 
    specific exit point (together with a specific outgoing transition) for 
    each cycle. Namely, this path can be described as:
    \[
        \pi := v_0 \to_E^* v_0' \to_{e_1} v_1 \to_E^* v_1' \to_{e_2} \cdots 
    \to_{e_k} v_k \to_E^* v_k'
    \]
    where 
    \begin{itemize}
        \item $v_0 = p$ and $v_k' = q$ 
        \item $v_i$ and $v_i'$ belong to the same SCC in $G$
        \item $v_i'$ and $v_{i+1}$ belong to different SCCs in $G$.
    \end{itemize}
    In other words, each $v_i$ (resp. $v_i'$) describes the entry (resp. exit)
    node for the $i$th SCC that is used in this path. 
    Our algorithm first guesses this path $\pi$.
    We construct the formula
    $\varphi_{v_i,v_i'}(\bar x_i,\bar x_i')$ by means of Lemma \ref{lm:accelerate}.
    We then only need to use a PAD solver to check satisfiability for the 
    formula
    \[
        \Phi := \left(\bigwedge_{i=0}^k \varphi_{v_i,v_i'}(\bar x_i,\bar
        x_i')\right)
        \wedge \left(\bigwedge_{i=1}^k \theta_i(\bar x_{i-1}',\bar x_i)\right)
        \wedge \psi(\bar x_1,\bar x_k'),
    \]
    where $\theta_i$ applies the transition $e_i$ to $\bar x_{i-1}'$ to obtain
    $\bar x_i$. That is, we have $e_i = (v_i',\theta_i(\bar x,\bar x'),v_i)$.
    Altogether, we obtain an NP algorithm with an access to PAD oracle for the 
    problem. The NEXP upper bound \cite{LOW15} for PAD gives us also an 
    NEXP upper bound for our problem.
\end{proof}
}

\OMIT{
Theorem \ref{th:flat} gives rise to a straightforward \emph{general loop 
acceleration scheme $\ACCELERATE$} 
for underapproximating the solutions to a quadratic word equation $(E,S)$ with
regular constraints. First, we build the counter system $\CA(E,S) =
(\counters,\controls,\transrel)$. Next, a \defn{flattening} of $\CA(E,S)$ is
a flat counter system $\CA' = (\counters,\controls',\transrel')$, each of
whose simple cycle is 1-variable-reducing with unary Presburger guards,
such that there exists a mapping $g: \controls' \to \controls$ (called
\defn{folding}) such that: (1) $(p,\Phi,q) \in \transrel'$ implies
$(g(p),\Phi,g(q)) \in \transrel$, (2) there exists $p,q \in \controls'$
such that $g(p) = (E,S)$ and $g(q) = (\epsilon=\epsilon,\emptyset)$. Therefore,
$\ACCELERATE$ enumerates all flattenings $\CA_1,\CA_2,\ldots$ of $\CA(E,S)$ in 
some order, and 

Next, given
any positive integer $k$, we define the $k$-unfolding of $\CA(E,S)$ as the
counter system $\CA^k(E,S) = (\counters,\controls_k,\transrel_k)$, where:
\begin{itemize}
    \item $\controls_k$ is the set of all paths in the control structures of
        $\CA(E,S)$ of length at most $k$.
    \item 
\end{itemize}
}

\lmcschanged{
\begin{remark}
    \label{rm:flat}
    Note that the reduction from system of word equations to a single word
    equation that we remark in Section \ref{sec:prelim} may preserve flatness,
    but that is not necessarily the case. The positive example where flatness is
    preserved is given in Section \ref{sec:prelim}, but an example where flatness is
    not preserved is $xy = yz \wedge z = x$.
\end{remark}
}

\subsection{Application to word equations with length constraints}

Theorem \ref{th:flat} gives rise to a simple and sound (but not complete)
technique for solving quadratic word equations with length constraints:
given a quadratic word equation $(E,S)$ with regular constraints, if
the counter system $\CA(E,S)$ is flat, each of whose simple cycle is
1-variable-reducing with unary Presburger guards, then apply the decision
procedure from Theorem \ref{th:flat}. In this section, we show completeness of
this method for the class of regular-oriented word equations recently defined 
in \cite{DMN17}, which can be extended with regular constraints given as 1-weak
NFA \cite{BRS12}.
\OMIT{
A word $w\in (\ialphabet\cup\Vars)^*$ is \emph{regular-(ordered)} if each 
variable $x\in\Vars$ appears at most once
in $w$.
The word $w\in (\ialphabet\cup\Vars)^*$ is \emph{non-crossing} if between any two occurrences of the same variable $x$,
no other variable different from $x$ occurs.
For example $ax_1ax_2$ is regular but $ax_1bx_1$ is not, and $ax_1bx_1x_2ax_2$ is non-crossing but 
$ax_1bx_2x_1ax_2$ is not.
}
A word equation is \defn{regular} if each variable $x
\in\Vars$ occurs at most once on each side of the equation. 
Observe that $xy = yx$ is regular, but $xxyy = zz$ is not. It is easy to
see that a regular word equation is quadratic.
A word equation $L = R$ is said to be 
\defn{oriented} if there is a total ordering $<$ on $\Vars$ such that 
the occurrences of variables on each side of the equation preserve $<$, i.e.,
if $w = L$ or $w = R$ and $w = w_1\alpha w_2\beta w_3$ for some $w_1,w_2,w_3 \in
(\ialphabet\cup\Vars)^*$ and $\alpha,\beta \in \Vars$, then $\alpha < \beta$.
Observe that $xy = yz$ (i.e. that $x$ and $z$ are conjugates) is oriented, but
$xy = yx$ is not oriented. It was shown in \cite{DMN17} that the satisfiability
for regular-oriented word equations is NP-hard.
We show satisfiability for this class with length constraints is decidable.

\begin{theorem}
    The satisfiability problem of regular-oriented word equations with
    length constraints is decidable in nondeterministic exponential time.
    In fact, it is solvable in NP with PAD oracles.
    \label{th:sat}
\end{theorem}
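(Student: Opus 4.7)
The plan is to apply Theorem \ref{th:flat} to the counter system $\CA(E)$ constructed in Section \ref{sec:reduce} from the given regular-oriented quadratic word equation $E$, taking the length constraint $\phi$ as the Presburger parameter $\psi$ of that theorem. For this to work, I need to verify two structural properties of $\CA(E)$: (i) the underlying control graph is flat, and (ii) every simple cycle is 1-variable-reducing (and hence, by inspection of the rules defining $\CA(E)$, has only unary Presburger guards). Once (i) and (ii) are established, Theorem \ref{th:flat} produces an NP algorithm with a PAD oracle to decide whether a run from control state $E$ to $(\epsilon = \epsilon)$ exists whose initial counter values satisfy $\phi$, and the NEXP bound for PAD from \cite{LOW15} then delivers the overall NEXP time bound.

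For (i), the strategy is to analyse how Nielsen transformation acts on a regular-oriented equation. The orientation provides a total order $<$ on variables, and the key property I would prove by induction on Nielsen steps is that regular-orientedness is preserved (up to a canonical renaming of the variable order induced by a substitution like $\beta \to \alpha\beta$). This preservation is crucial because when the two leading symbols $\alpha, \beta$ of an equation are both variables with, say, $\alpha < \beta$, orientation prevents $\alpha$ from occurring to the right of $\beta$ on either side; consequently only one of the two (P4) guesses is ``productive'', and the other either closes a self-loop or eliminates a variable. Using this together with a weight function on equations (for instance, the total number of variable occurrences plus constant letters), I would show that every strongly connected component of $\CA(E)$ consists of a single simple cycle, which is exactly flatness in the sense of Section \ref{sec:prelim}.

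For (ii), I would argue that the only Nielsen steps available inside a simple cycle are instances of (P4) applied with a fixed orientation of the guess (and possibly (P2) or (P3), which decrement a single counter by one), because rule (P1) and the prefix-erasure rules strictly decrease the size of the equation and hence cannot close a cycle. A case analysis then shows that the cycle cannot alternate between decrementing different ``target'' variables without violating regular-orientedness. Thus every transition in any cycle is of the form $\DEC_y$ or $\SUB_{y,z}$ for one fixed counter $y$ determined by the cycle, which is precisely the 1-variable-reducing hypothesis required by Lemma \ref{lm:accelerate}.

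I expect the main obstacle to be the structural argument in (i): one must track carefully how the ordering $<$ evolves under the substitutions performed by the Nielsen rules, especially in the presence of constant letters produced by (P2) and (P3), in order to rule out the existence of two distinct simple cycles sharing a node. Once that bookkeeping is in place, the remaining work is essentially a direct invocation of Theorem \ref{th:flat} with $p = E$, $q = (\epsilon = \epsilon)$, and $\psi(\bar x, \bar y) \equiv \phi(\bar x)$.
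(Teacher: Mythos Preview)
Your overall strategy mirrors the paper's: establish that $\CA(E)$ is flat with 1-variable-reducing simple cycles (the paper's Lemma~\ref{lm:rowqIsFlat}), having first shown that Nielsen transformation preserves regular-orientedness (Lemma~\ref{lm:preserve}), and then invoke Theorem~\ref{th:flat}. Your sketches for (i) and (ii) point in the right direction, and the paper's actual arguments are close to what you outline (in particular, the paper also argues that only rules (P2)--(P4) can occur on a cycle and that orientedness forces a single distinguished variable $y$ to be the one decremented throughout).

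There is, however, a genuine gap in how you reach the NP-with-PAD-oracle bound. Theorem~\ref{th:flat} is stated relative to the size of its \emph{input} counter system, and $\CA(E)$ may have exponentially many control states in $|E|$ (the set of equations reachable under Nielsen rewriting is only bounded exponentially). Simply feeding $\CA(E)$ to Theorem~\ref{th:flat} therefore yields an NP-in-exponential-input procedure, which collapses to NEXP with a PAD oracle but not to NP with a PAD oracle. The missing ingredient---and the real content of Lemma~\ref{lm:rowqIsFlat}---is a \emph{polynomial} bound on the length of every simple cycle ($O(|E|)$) and every simple path ($O(|E|^2)$), hence on every skeleton, in the control structure of $\CA(E)$. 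With that bound one never constructs $\CA(E)$ explicitly: instead one nondeterministically guesses a polynomial-size skeleton from $E$ to $(\epsilon=\epsilon)$, builds just that fragment on the fly, and applies the acceleration of Theorem~\ref{th:flat} to it. Your weight-function idea might be developed into such a bound, but as written it is aimed only at flatness, not at the quantitative size of skeletons; you need to make that step explicit to recover the claimed complexity.
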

This decidability (in fact, an NP upper bound) for the
\emph{strictly regular-ordered} subcase, in which each variable occurs precisely
once on each side, was proven in \cite{Ganesh-boundary}. 
For this subcase, it was shown that Presburger Arithmetic is sufficient,
but the decidability for the general class of regular-oriented word equations 
with length constraints remained open.

We start with a simple lemma that $\To$ preserves regular-orientedness. 
\OMIT{
Its proof can be found in the \shortlong{full version}{appendix}.
}
\begin{lemma}
    If $E \To E'$ and $E$ is regular-oriented, then $E'$ is also
    regular-oriented.
    \label{lm:preserve}
\end{lemma}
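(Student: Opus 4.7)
The plan is a case analysis on the rule of $\To$ applied to $E = \alpha w_1 = \beta w_2$. Let $<$ be the total order on $\Vars$ witnessing that $E$ is oriented; I will check that regularity and orientedness of $E'$ can be established using a suitable (possibly modified) total order.

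Regularity is immediate. In the empty-prefix-erasure and (P1) rules, some occurrences are simply deleted from both sides. In (P2)--(P4), the substitution $[\alpha\beta/\beta]$ (or $[\beta\alpha/\alpha]$) is applied on both sides; the key observation is that since $E$ is regular and $\beta \in \Vars$ is the RHS prefix, $\beta \notin w_2$, so the substitution on $w_2$ is vacuous (and symmetrically for $\alpha$ on the LHS whenever $\alpha \in \Vars$). Hence after the substitution $\alpha$ and $\beta$ each still occur at most once on each side.

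For orientedness, the rules other than (P4) trivially preserve $<$ as a witness: erase-empty-prefix and (P1) delete occurrences without reordering the remaining variables, and (P2), (P3) insert only constants. The substantive case is (P4) with guess $\alpha \preceq \beta$ (the other guess is symmetric), where $E' = w_1[\alpha\beta/\beta] = \beta\, w_2[\alpha\beta/\beta]$. If $\beta \notin w_1$ then both substitutions are vacuous (also using $\beta \notin w_2$ from regularity), so $E'$ inherits orientedness from $E$ under $<$.

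The only genuinely new situation is $\beta \in w_1$. Here a fresh occurrence of $\alpha$ is introduced immediately before $\beta$ on the LHS of $E'$. The key claim is that in this subcase $\alpha \notin w_2$: otherwise, since $\beta$ is the RHS prefix, orientedness of $E$ would force $\beta < \alpha$, whereas on the LHS $\alpha w_1$ the variable $\beta$ appears strictly after $\alpha$, forcing $\alpha < \beta$, a contradiction. Consequently $\alpha$ does not appear on the RHS of $E'$, and we may reposition $\alpha$ in the total order without violating any RHS constraint. Define $<'$ from $<$ by moving $\alpha$ to sit immediately below $\beta$ and agreeing with $<$ elsewhere. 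Writing the variables of $w_1$ in word order as $\gamma_1, \ldots, \gamma_k, \beta, \delta_1, \ldots, \delta_l$ (with the $\gamma_i$'s before $\beta$ and the $\delta_i$'s after), the variables on the LHS of $E'$ in word order become $\gamma_1, \ldots, \gamma_k, \alpha, \beta, \delta_1, \ldots, \delta_l$, which is $<'$-increasing by construction. The RHS is unchanged (modulo the vacuous substitution) and contains no $\alpha$, so it remains $<'$-increasing. The main obstacle is precisely this (P4) subcase: one must exploit regularity together with orientedness of $E$ to rule out $\alpha \in w_2$, and then modify the total order to accommodate the newly introduced LHS occurrence of $\alpha$; all other cases are essentially syntactic.
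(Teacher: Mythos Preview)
Your proof is correct and follows essentially the same line as the paper's: both reduce the real work to the case (P4) with $\beta \in w_1$, and both establish the key claim $\alpha \notin w_2$ by the same contradiction (opposite orders of $\alpha,\beta$ on the two sides of $E$). The only difference is in the final verification of orientedness: the paper observes that for a regular equation, orientedness is equivalent to having no pair of variables in opposite orders on the two sides, so once $\alpha$ appears only on the LHS of $E'$ no conflict involving $\alpha$ is possible; you instead construct an explicit modified order $<'$ by repositioning $\alpha$ just below $\beta$. Both are fine; the paper's characterisation spares you the bookkeeping of checking that the $\gamma_i$'s end up below $\alpha$ in $<'$, while your version has the virtue of exhibiting the witnessing order directly.
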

\begin{proof}
    It is easy to see each rewriting rule preserves regularity. Now, because
    $E'$ is regular, to show that $E' := L = R$ is also oriented it is
    sufficient and necessary to 
    show that there are no two variables $x, y$ such that $x$ occurs before
    $y$ in $L$, but $y$ occurs before $x$ in $R$. All rewriting rules except 
    for (P2)--(P4) are easily seen to preserve orientedness. Let us write
    $E := \alpha w_1 = \beta w_2$ with $\alpha \neq \beta$, and assume $E' = 
    E[\alpha\beta/\beta]$; the
    case of $E' = E[\beta\alpha/\alpha]$ is symmetric. So, $\beta$ is some 
    variable $y$. If $\beta$ does not occur in $w_1$, then $L = 
    w_1$ and $R = \beta w_2$ and that $E$ is oriented
    implies that $E'$ is oriented. So assume that $\beta$ appears in $w_1$,
    say, $w_1 = u\beta v$. Then, $R= \beta w_2$ and $L = u\alpha\beta v$.
    Thus, if $\alpha \in \ialphabet$, $E'$ is oriented because we can use
    the same variable ordering that witnesses that $E$ is oriented. So, assume
    $\alpha \in \Vars$. It suffices to show that $\alpha$ occurs 
    \emph{at most once} in $E$.
    For, if $\alpha$ also occurs on the other side of the equation $E$
    (i.e. in $w_2$), $\alpha$ precedes $\beta$ on l.h.s. of $E$, while
    $\beta$ precedes $\alpha$ on r.h.s. of $E$, which would show that $E$
    is not oriented. 
\end{proof}
Next, we show a bound on the lengths of cycles and paths of the counter system
associated with a regular-oriented word equation.

\begin{lemma}
    \label{lm:rowqIsFlat}
Given a regular-oriented word equation $E$, the counter system
    $\CA(E)$ is flat, \lmcschanged{where each simple cycle is
    1-variable-reducing}. Moreover, the length of each simple 
    cycle (resp.~path) in the control structure of $\CA(E)$ is 
    $O(|E|)$ (resp.~$O(|E|^2)$).
\end{lemma}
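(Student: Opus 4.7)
The plan is to exploit the interplay between equation size and the orientedness constraint. First I would invoke Lemma~\ref{lm:preserve} to conclude that every equation reachable from $E$ under $\To^*$ is regular-oriented, and observe (by inspecting the Nielsen rules in Section~\ref{sec:nielsen}) that no rule increases $|E'|$ while several strictly decrease it. In particular, the prefix-removal rules (P2), (P3), (P4) \emph{preserve} the size of the equation exactly when the variable they substitute also occurs on the opposite side; otherwise they strip one character and reduce the size by one. I will call this latter family (together with (r1) and (r2,P1)) the \emph{size-reducing} edges of $\CA(E)$ and the rest \emph{size-preserving}.

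The crux of flatness is a uniqueness observation for the \emph{pivot variable}: in any regular-oriented equation $E' = \alpha w_1 = \beta w_2$, at most one of $\alpha, \beta$ can reappear on the opposite side. In the only non-trivial case, $\alpha, \beta \in \Vars$ with $\alpha \ne \beta$ (rule (P4)), if $\beta$ occurred in $w_1$ and $\alpha$ in $w_2$, then the orientedness ordering would force both $\alpha < \beta$ (from the left) and $\beta < \alpha$ (from the right), a contradiction. Consequently each node of $\CA(E)$ has at most one outgoing size-preserving edge, so the size-preserving subgraph is a functional graph; since every cycle of $\CA(E)$ must live in this subgraph (size-reducing edges strictly decrease $|E'|$), and a functional graph has at most one cycle per weakly connected component, $\CA(E)$ is flat.

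For the 1-variable-reducing property, I would check by a short case analysis on the leading pair that a size-preserving step preserves the identity of the pivot $y$: after substituting $y \mapsto \alpha y$ (or the symmetric variant) and stripping the common prefix, the new right-hand side still begins with $y$ and the new left-hand side still contains $y$. Hence every edge of a simple cycle is $\DEC_y$ or $\SUB_{y,z}$ for one fixed $y$, as required.

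Finally, for the length bounds I would read off each cycle as a cyclic rotation of the word surrounding $y$: each size-preserving step moves one symbol from one side of $y$ to the other, so the equation returns to itself after at most $|E|$ steps, giving a cycle length of $O(|E|)$. A simple path is then an alternation of at most $|E|$ size-reducing edges (each strictly decreasing $|E'|$) with size-preserving sub-paths between them; each such sub-path is a simple path in the functional subgraph, hence of length $O(|E|)$ (at most the enclosing cycle plus its ``tail''). Multiplying yields $O(|E|^2)$. The main obstacle I anticipate is formalising the rotation interpretation cleanly in the presence of mixed (P2)/(P3)/(P4) steps along the same cycle; once this is set up, the rest of the bound is routine structural bookkeeping.
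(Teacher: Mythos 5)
Your proposal is correct and follows essentially the same route as the paper: the same orientedness contradiction (both leading variables reappearing on the opposite sides would force $\alpha < \beta$ and $\beta < \alpha$ simultaneously) is what pins down a unique pivot, the same cyclic-rotation picture of the prefix before the pivot bounds each simple cycle by $O(|E|)$, and the same interleaving of at most $|E|$ size-reducing steps with these cycles yields the $O(|E|^2)$ path bound. The only difference is presentational --- you package flatness as ``each node has at most one size-preserving out-edge, so the size-preserving subgraph is functional,'' while the paper analyses the shape of an arbitrary simple cycle ($L_i = \mathrm{cyc}^i(u)yv$ with constant right-hand side) and notes that every edge leaving it is size-reducing; these are two views of the same fact, and since your rotation observation shows every size-preserving path closes up into a cycle, the ``tail'' you parenthetically worry about in the functional subgraph does not actually arise.
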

As an example, the reader can confirm the flatness of $\CA(xy = yz)$ by 
studying the counter system in Figure \ref{fig:nielsen_ex3}.
\OMIT{
\begin{lemma}
    Given a simple cycle $E_0 \To E_1 \To \cdots \To E_n$ with $n > 0$ (i.e. 
    $E_0 = E_n$ and $E_i \neq E_j$ for all $0 \leq i < j < n$). Suppose
    that $E_i := L_i = R_i$ with $L_i = \alpha_i w_i$ and $R_i = \beta_i w_i'$. 
\end{lemma}
}
\begin{proof}
We now show Lemma \ref{lm:rowqIsFlat}.
Let $E := L = R$. 
\lmcschanged{We first analyze the shape of a simple cycle, from which we
will infer most of the properties claimed in Lemma \ref{lm:rowqIsFlat}.}
Given a simple cycle
$E_0 \To E_1 \To \cdots \To E_n$ with $n > 0$ (i.e. $E_0 = E_n$ and $E_i \neq 
E_j$ for all
$0 \leq i < j < n$), it has to be the case that each rewriting in this cycle
applies one of the (P2)--(P4) rules since the other rules reduce the size of the
equation. We have $|E_0| = |E_1| = \cdots = |E_n|$.
Let $E_i := L_i = R_i$ with $L_i = \alpha_i w_i$ and $R_i = \beta_i w_i'$. 
Let us assume that $E_1$ be $w_0[\alpha_0\beta_0/\beta_0] = 
\beta_0 w_0'[\alpha_0\beta_0/\beta_0]$; the case with
$E_1$ be $\alpha_0 w_0[\beta_0\alpha_0/\alpha_0] = 
w_0'[\beta_0\alpha_0/\alpha_0]$ will be easily seen to be symmetric. 
This assumption implies that $\beta_0$ is a variable $y$, and that 
$L_0 = uyv$ for some words $u,v \in (\ialphabet \cup \Vars)^*$ (for, otherwise,
$|E_1| < |E_0|$ because of regularity of $E$).
Furthermore, 
it follows that, for each $i \in [n-1]$, $E_{i+1}$ is
$w_i[\alpha_iy/\lmcschanged{y}] = y w_i'$ and $\beta_i = y$, i.e., 
the counter system $\CA(E)$ applies
either $\SUB_{y,x}$ (in the case when $x = \alpha_i$) or $\DEC_y$ (in the
case when $\alpha_i \in \ialphabet$). For, otherwise, taking a minimal $i \in
[1,n-1]$ with $E_{i+1}$ being $\alpha_i w_i[y\alpha_i/\alpha_i]
= w_i'[y\alpha_i/\alpha_i]$ for some variable $x = \alpha_i$
shows that $E_i$ is of the form $x ... y ... = y ... x ...$ (since $|E_{i+1}| =
|E_i|$) contradicting that $E_i$ is oriented. 
\lmcschanged{This shows
that this simple cycle is 1-variable-reducing.} Consequently, we have
\begin{itemize}
    \item $R_i = R_j$ for all $i,j$, and 
    \item $L_i = \mathrm{cyc}^i(u)yv$ for all $i \in [n]$
\end{itemize}
\lmcschanged{
The shape of the equations in this simple cycle guarantees also that
any edge out of this simple cycle \emph{reduces the length} of the equation, 
which guarantees that each of the nodes $E_1,\ldots,E_n$ only has this as the
unique cycle. Finally, the length of the cycle is at most $|L_0| - 1 \leq 
|L| - 1$.
%
Since this simple cycle is taken arbitrarily, it follows that the structure of 
$\CA(E)$ is flat, each of whose simple cycle is 1-variable-reducing and is 
of length at most $N = \max\{|L|,|R|\} - 1$. 
}

Consider the signature (i.e. dag of SCCs) of the control structure $\CA(E)$;
see Preliminaries for the definition of ``signature''. In this dag,
each edge from one SCC to the next is size-reducing. 
Therefore, the maximal length of a path in 
this dag is $|E|$. Therefore, since the maximal path of each SCC is $N$ (from
the above analysis), the maximal length of a simple path in the control
structure is at most $N^2$.
\end{proof}
\lmcschanged{
This Lemma implies Theorem \ref{th:sat} for the following reason. We
nondeterministically guess one skeleton of $\CA(E)$ from initial node
$E := L = R$ to $\epsilon = \epsilon$. By Lemma \ref{lm:rowqIsFlat}, this is of
polynomial-size and so we can apply Theorem \ref{th:flat} to obtain a
nondeterministic polynomial-time algorithm with a PAD oracle.
}
\OMIT{
Since this is of
polynomial-size, we can simply apply Theorem \ref{th:flat} to obtain Theorem
\ref{th:sat}. More precisely,
by Theorem~\ref{th:flat}, we obtain the formula $\lambda_{E,(\epsilon =
\epsilon)}(\bar x,\bar x')$. By Theorem~\ref{th:reduce}, the length abstraction
of $\CA(E)$ is $\psi(\bar x) := \exists \bar x' \lambda_{E,(\epsilon = \epsilon)}(\bar x,\bar 
x')$. Therefore, solving the satisfiability of the word equation $E$ with
the length constraint $\theta(\bar x)$, it suffices to ask the query
$\psi(\bar x) \wedge \theta(\bar x)$ to a PAD solver. All in all, the complexity
is NP with a PAD oracle.
}

\lmcschanged{
\begin{remark}
    Note that the same example in Remark \ref{rm:flat} shows that the reduction
    from a system of word equations --- where each equation is regular-oriented
    --- to a single word equation
    does not preserve regular-orientedness. It might be possible to extend the
    notion of regular-orientedness appropriately to a system of word equations
    (similar to the extension of the definition of regular word equations to a
    regular system of word equations given in \cite{DM20}). We leave this for
    future work.
\end{remark}
}

\smallskip
\noindent
\textbf{Handling regular constraints: } 
We now extend Theorem \ref{th:sat} with regular constraints.
\begin{theorem}
    The satisfiability problem of regular-oriented word equations with
    regular constraints and length constraints is solvable in 
    nondeterministic exponential time. In fact, it is solvable in 
    PSPACE with PAD oracles. \label{th:sat_reg}
\end{theorem}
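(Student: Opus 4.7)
The plan is to extend the proof of Theorem \ref{th:sat} by replacing $\CA(E)$ with the counter system $\CA(E,S)$ from Theorem \ref{th:reduce2} and exploiting the semilinear-set machinery of Lemma \ref{lm:parikh_pspace}. The first step is to lift the flatness analysis of Lemma \ref{lm:rowqIsFlat} to $\CA(E,S)$. A simple cycle in $\CA(E,S)$ projects onto a (possibly multiple-iteration) traversal of a simple cycle of $\CA(E)$, which by Lemma \ref{lm:rowqIsFlat} rewrites a single variable $y$ throughout; thus the lifted cycle is still 1-variable-reducing, only with period equal to the original cycle length times the order of the per-iteration monoid update matrix. Any edge leaving such a cycle in $\CA(E,S)$ projects to an edge that strictly decreases the size of the underlying equation (by the analysis in Lemma \ref{lm:rowqIsFlat}), so $\CA(E,S)$ is flat and any skeleton from $(E,f)$ to $(\epsilon = \epsilon, \emptyset)$ traverses at most $O(|E|)$ SCCs.

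Given this lifted flatness, the algorithm proceeds as follows. Nondeterministically guess a monoid element $f : \Vars \to \B^{r \times r}$ consistent with $S$, then guess a skeleton in $\CA(E,S)$ specified by its polynomial-length sequence of SCC entry/exit nodes (each a polynomial-size pair of equation and monoid annotation) and connecting transitions. Apply Theorem \ref{th:flat} to obtain a polynomial-size existential $\PAD$ formula for the skeleton's reachability relation. Using Lemma \ref{lm:parikh_pspace}, enumerate in polynomial space a conjunct describing $\LEN_{\bar x}(\phi^{-1}(f))$, then conjoin with the user-supplied length constraint and dispatch the result to a $\PAD$ oracle. By Savitch's theorem, the nondeterministic polynomial-space procedure collapses to PSPACE with a $\PAD$ oracle, and plugging in the NEXP upper bound for $\PAD$ \cite{LOW15} yields the overall nondeterministic exponential time bound.

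The main obstacle is controlling several sources of exponential blowup. Each monoid element is polynomial-size but there are exponentially many candidates, so they must be guessed in polynomial space; moreover, realisability and consistency of $f$ against $S$ reduce to intersection emptiness of NFAs, which is PSPACE-complete and drives the overall PSPACE upper bound. The counter system $\CA(E,S)$ has exponentially many states due to the monoid annotations, but the lifted flatness guarantees that any skeleton has only polynomially many SCC transitions, and the per-cycle monoid update matrix is itself polynomial; the divisibility predicate in $\PAD$ then cleanly encodes both the linear counter decrement (as in Lemma \ref{lm:accelerate}) and the modular closure condition on the iteration count imposed by the monoid. Finally, although the intersection NFA for $\phi^{-1}(f(x))$ can be of exponential size, Lemma \ref{lm:parikh_pspace} precisely supplies the polynomial-space enumeration of linear sets needed to keep the final $\PAD$ query of polynomial size.
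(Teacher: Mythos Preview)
Your overall approach matches the paper's: lift flatness from $\CA(E)$ to $\CA(E,S)$ (the paper's Lemma~\ref{lm:rowqIsFlat2}), guess a skeleton whose number of SCCs is polynomial in $|E|$, accelerate each cycle in polynomial space (the paper's Lemma~\ref{lm:accelerate2}), combine with the Parikh data from Lemma~\ref{lm:parikh_pspace}, and query a PAD oracle.

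One step deserves more care. Your flatness argument asserts that the lifted cycle has period equal to ``the original cycle length times the order of the per-iteration monoid update matrix,'' which presupposes a well-defined deterministic monoid update per $\CA(E)$-iteration. But in the Nielsen rules (P2)--(P4) with regular constraints, the new value $f'(\beta)$ is \emph{nondeterministically guessed} subject only to the constraint $f(\beta) = M \cdot f'(\beta)$; over the boolean-matrix monoid this equation can have several solutions, so there is no single ``update matrix'' whose order determines the period, and merely observing that edges leaving the cycle shrink the equation does not by itself rule out two distinct simple cycles through the same node $(E,f)$. The paper closes this by arguing directly that any two simple cycles through a common node must coincide: the left-multipliers $M_i$ are uniquely fixed by the equations $E_i$ together with the (unchanged) annotations on variables other than $y$, and the chain of constraints $f_i(y) = M_i \cdot f_{i+1}(y)$ then forces the two cycles to have identical length and identical intermediate annotations. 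Your intuition points in the right direction, but the nondeterminism is precisely the subtlety that must be dispatched. As for acceleration, your ``modular closure condition on the iteration count'' and the paper's Lemma~\ref{lm:accelerate2} (traverse the exponential cycle in PSPACE and record the aggregate linear decrement with binary coefficients) are two phrasings of the same computation; either yields a polynomial-size PAD formula.
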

Let us prove this theorem. The first key lemma to prove this theorem is the 
following:
\begin{lemma}
The counter system $\CA(E,S)$ is flat, where each cycle is 1-variable-reducing. 
    Moreover, the length of each simple 
    cycle in the control structure of $\CA(E,S)$ is 
    exponential in $|E|$. The maximal length of a skeleton in the control
    structure of $\CA(E,S)$ is the same as the maximal length of a skeleton
    in the control structure of $\CA(E)$, which is polynomial in $|E|$.
    \label{lm:rowqIsFlat2}
\end{lemma}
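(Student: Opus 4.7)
The plan is to lift the structural analysis of $\CA(E)$ from Lemma \ref{lm:rowqIsFlat} to $\CA(E,S)$ through the natural projection $\pi : \CA(E,S) \to \CA(E)$ given by $\pi(E',f) = E'$. By construction, each transition of $\CA(E,S)$ arises from a transition of $\CA(E)$ by decorating control states with monoid elements while keeping the counter-update formula $\Phi$ unchanged. Hence $\pi$ commutes with transitions and preserves counter-update labels, so that every simple cycle $v_0 \to \cdots \to v_n = v_0$ in $\CA(E,S)$ projects to a (not necessarily simple) cyclic traversal of some simple cycle of $\CA(E)$, inheriting its counter updates verbatim. By Lemma \ref{lm:rowqIsFlat} the projected cycle is 1-variable-reducing, and this property lifts directly to the cycle in $\CA(E,S)$.

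Next I would establish flatness. Fix a node $v = (E',f)$ of $\CA(E,S)$. By flatness of $\CA(E)$, $E'$ lies on at most one simple cycle $C$ of length $\ell \leq |E|$. The analysis in Lemma \ref{lm:rowqIsFlat} shows that along $C$ only one fixed variable $y$ is modified by (P2)--(P4); hence along any lift of $C^k$ into $\CA(E,S)$ the monoid function $f$ is frozen except on $y$, and $f(y)$ evolves by left-division by a fixed periodic sequence $A_0,\ldots,A_{\ell-1}$ of boolean matrices determined by $C$ (each $A_i$ is either $\phi(\alpha_i)$ or $f(x_i)$ for some variable $x_i$ whose value is unchanged along $C$). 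The simple-cycle requirement together with the return condition $f_n(y) = f_0(y)$ forces the orbit of $f(y)$-values to be uniquely determined by $v$, giving flatness of $\CA(E,S)$.

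For the bound on the simple cycle length, the $f(y)$-values along such a cycle are pairwise distinct elements of the finite submonoid of realisable characteristic matrices in $\B^{r \times r}$, whose cardinality is at most $2^{r^2}$, where $r$ is the total number of automaton states in $S$. Since $r$ is bounded by the input size, the cycle length is at most $\ell \cdot 2^{r^2}$, which is exponential in $|E|$.

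Finally, the skeleton bound transfers by projection. Any transition between two distinct SCCs of $\CA(E,S)$ projects under $\pi$ to a transition between two distinct SCCs of $\CA(E)$, and such inter-SCC transitions correspond to rule applications that strictly reduce equation length. Hence the length of any simple path in the signature of $\CA(E,S)$ is bounded by the analogous length in $\CA(E)$, which is $O(|E|^2)$ by Lemma \ref{lm:rowqIsFlat}. The main obstacle is the flatness argument: rigorously justifying that the nondeterministic guesses of $M_{\beta'}$ in (P2)--(P4), restricted to lifts that close the cycle, admit a unique choice at each point. This relies on the 1-variable-reducing shape of the projected cycle freezing all components of $f$ except $f(y)$ and reducing the analysis to a single orbit in the finite matrix monoid.
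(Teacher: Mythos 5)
Your overall strategy coincides with the paper's: project $\CA(E,S)$ onto $\CA(E)$, invoke Lemma~\ref{lm:rowqIsFlat} to see that every cycle of $\CA(E,S)$ lies over a single $1$-variable-reducing simple cycle $C$ of $\CA(E)$ (so that all components of $f$ except $f(y)$ are frozen along it), bound the cycle length by the size of the matrix monoid, and transfer the skeleton bound through the projection. The genuine gap is exactly the step you flag as ``the main obstacle'': the claim that the return condition ``forces the orbit of $f(y)$-values to be uniquely determined by $v$'' \emph{is} the flatness statement, and it is not automatic. The transitions of $\CA(E,S)$ guess $M_{\beta'}$ subject only to $f(\beta) = A \cdot M_{\beta'}$, and left division in $\B^{r\times r}$ is not unique, so $f_{i+1}(y)$ is not determined by $f_i(y)$ and $C$; a priori two distinct simple cycles through the same node $(E',f)$ could branch at such a guess and close up at different lengths. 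The paper supplies the missing argument: it takes two length-minimal distinct cycles $\pi,\pi'$ through a common node, observes that both project to iterates of the same simple cycle of $\CA(E)$ and therefore satisfy $f_i(y)=M_i\cdot f_{i+1}(y)$ for one and the same sequence of multipliers $M_i$ (these, unlike the $f_i(y)$ themselves, are determined by $C$ because they are $\phi(\alpha_i)$ or the frozen value $f(\alpha_i)$), deduces from the closing conditions that the round-trip products fix $f_0(y)$, and uses minimality to force $n=m$ and $f_i=f_i'$ for all $i$, a contradiction. Some argument of this kind is needed; your proposal asserts the conclusion rather than deriving it.

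A second, related gap is in your skeleton bound. You claim that every transition between distinct SCCs of $\CA(E,S)$ projects to a transition between distinct SCCs of $\CA(E)$, hence to a length-reducing step. This is not true on its face: a transition following an edge of a simple cycle $C$ of $\CA(E)$ whose lift does not return to its starting monoid decoration joins two \emph{different} SCCs of $\CA(E,S)$ (possibly both trivial) while staying inside the single SCC $C$ of $\CA(E)$, so a simple path in the signature of $\CA(E,S)$ could in principle traverse many SCCs all projecting into one SCC of $\CA(E)$. Excluding such long chains is again a consequence of the uniqueness analysis above, and the paper phrases it as a correspondence between skeletons of $\CA(E,S)$ and skeletons of $\CA(E)$ under projection; your argument needs this additional justification to yield the polynomial skeleton bound.
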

Note that this does not immediately imply the complexity upper bound of PSPACE 
with PAD oracles (which we will show later below), but it does imply
decidability in the same way as Lemma \ref{lm:rowqIsFlat} implies 
Theorem \ref{th:sat}. More precisely,
we nondeterministically guess a skeleton in $\CA(E,S)$, say, starting from
$(E,f)$, where $f$ is a monoid element consistent with $S$.
By Theorem \ref{th:flat}, we obtain the PAD formula $\psi(\bar x,\bar x') := \lambda_{(E,f),(\epsilon =
\epsilon,\emptyset)}(\bar x,\bar x')$. If the length constraint is $\theta(\bar
x)$, following Theorem \ref{th:reduce2} and Lemma \ref{lm:parikh_pspace}, our 
polynomial-space algorithm enumerates linear sets $\eta(\bar x)$ representing
the semilinear set $\LEN_{\bar x}(\phi^{-1}(f))$. 
\anthony{I think I haven't defined linear sets yet}
We will then ask the query
\[
    \theta(\bar x) \wedge \eta(\bar x) \wedge \exists \bar x'\psi(\bar x,\bar x')
\]
to the PAD solver. 

\begin{proof}[Proof of Lemma~\ref{lm:rowqIsFlat2}]
By Lemma \ref{lm:rowqIsFlat}, we know that $\CA(E)$ is flat. 
So, suppose that $\CA(E,S)$ is not flat. Therefore, we may take two different 
cycles $\pi, \pi'$ both visiting some node $(E,f)$. 
We write $\pi: (E_0,f_0) \to \cdots \to (E_n,f_n)$, and
$\pi': (E_0',f_0') \to \cdots \to (E_m',f_m')$, where $(E_0,f_0) = (E_0',f_0') =
(E_n,f_n) = (E_m',f_m') = (E,f)$. Projecting to the first argument, it must be
the case that $C = E_0 \to \cdots \to E_n$ and $C' = E_0' \to \cdots \to E_m'$ 
are the
same cycle in $\CA(E)$, \emph{repeated} several times. 
\lmcschanged{In particular, this means
that, for some counter $y$, each counter instruction in these cycles are either 
of the form $\DEC_y$ or $\SUB_{y,x}$.}
Without loss of 
generality,
let us assume that $n \leq m$ (otherwise, we swap $\pi$ and $\pi'$). 
We may assume that $\pi$ and $\pi'$ are length-minimal. We will show
that $n = m$, and that $f_i = f_i'$ for all $i$. 
Since $C$ and $C'$ are the same cycle in $\CA(E)$, 
there is a unique 
sequence of monoid elements $M_0,\ldots,M_{m-1}$ such that:
\begin{enumerate}
    \item[(1)] \lmcschanged{$f_i(y) = M_i  \cdot f_{i+1}(y)$} for all $i \in \{0,\ldots,n-1\}$, and
    \item[(2)] \lmcschanged{$f_i'(y) = M_i \cdot f_{i+1}'(y)$} for all $i \in \{0,\ldots,m-1\}$.
\end{enumerate}
This implies that
\lmcschanged{
\begin{eqnarray*}
    f_0(y) & = & \left(\prod_{i=0}^{n-1} M_i\right) \cdot f_0(y) \\
    f_0(y) & = & \left(\prod_{i=0}^{m-1} M_i\right) \cdot f_0(y)
\end{eqnarray*}    
}
implying that both $\prod_{i=1}^{n-1} M_i$ and \lmcschanged{$\prod_{i=1}^{m-1}
M_i$}
are the (unique) identity matrix $I$. This implies that if $m > n$, we would
have that 
\lmcschanged{
\[
    f_n'(y) = \left(\prod_{i=n}^{m-1} M_i\right) \cdot f_0(y)
\]
}
which contradicts minimality of the cycle $\pi'$. Therefore, we have $n=m$.
Since $f_n = f_n' = f$, by the equations in (1) and (2) we have that 
$f_i = f_i'$ for all $i \in \{0,\ldots,n\}$. This proof also contradicts the
fact that $\pi$ and $\pi'$ are different cycles. Therefore, $\CA(E,S)$ is flat,
and we also conclude from the above proof that each cycle is
1-variable-reducing.

Let us now analyze the length of the cycles and paths in $\CA(E,S)$. Note that
in the above argument $n$ is at most the size of the monoid $\B^{n\times n}$,
which is exponential in $n$. Observe that taking a projection of each skeleton 
in $\CA(E,S)$ to the first component (i.e. omitting the monoid elements)
gives us a unique skeleton in $\CA(E)$. The converse is of course also true:
given a skeleton $\pi$ in $\CA(E)$, there is at least one skeleton $\pi'$ in 
$\CA(E,S)$ whose projection to the first component corresponds to $\pi$. This
shows that the maximal length of skeletons in $\CA(E,S)$ coincides with the
maximal length of skeletons in $\CA(E)$.
\end{proof}

To conclude the complexity bound, it suffices to provide a polynomial-space 
algorithm which accelerates the exponentially-sized cycles in the control 
structure of $\CA(E,S)$.
\begin{lemma}
    There is a polynomial-space algorithm which given two control states $p := 
    (E,f), q := (E',f')$ in a cycle in $\CA := \CA(E,S)$ computes a formula
    $\varphi_{p,q}(\bar x,\bar x')$ in PAD such that $(p,\vecV) \to^*_{\CA}
    (q,\vecW)$ iff $\varphi_{p,q}(\vecV,\vecW)$ is satisfiable.
    \label{lm:accelerate2}
\end{lemma}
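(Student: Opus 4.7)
The plan is to lift Lemma \ref{lm:accelerate} to the exponentially long cycle in $\CA(E,S)$ by factoring it through its polynomial-size projection in $\CA(E)$. Let $C = E_0 \to E_1 \to \cdots \to E_{\ell - 1} \to E_0$ be the simple cycle in $\CA(E)$ underlying the given cycle in $\CA(E,S)$, of polynomial length $\ell = O(|E|)$ by Lemma \ref{lm:rowqIsFlat}; the $\CA(E,S)$-cycle is then $C$ traversed enough times for the monoid component to close up. From $C$ one computes directly, in polynomial time, the monoid effect $\mu := A_0 A_1 \cdots A_{\ell-1} \in \B^{r \times r}$ of a single $C$-traversal on the $y$-component together with the linear form $M = a_0 + \sum_{x \neq y} a_x \cdot x$ that gives the per-traversal decrement of $y$, exactly as in Lemma \ref{lm:accelerate}.

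Write $p = (E_a, f_p)$ and $q = (E_b, f_q)$. Any path from $p$ to $q$ staying inside the cycle has the shape ``$\beta$ prefix steps followed by $\tau \geq 0$ full $C$-traversals'', where $\beta := (b - a) \bmod \ell$ is a fixed polynomial-size integer; the prefix contributes a polynomial-size decrement $P$ and a fixed product $\nu := A_a A_{a+1} \cdots A_{a+\beta-1}$ of edge matrices. Monoid-consistency of the whole path reads $f_p = \nu \mu^\tau f_q$. Since $p,q$ lie in a cycle, the orbit $\{\mu^i f_q : i \geq 0\}$ is purely periodic, so there is a smallest period $k > 0$ with $\mu^k f_q = f_q$ and a smallest offset $\tau_0 \in \{0, \ldots, k-1\}$ with $\nu \mu^{\tau_0} f_q = f_p$, and the admissible values of $\tau$ are exactly $\{\tau_0 + m k : m \geq 0\}$. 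Both $k$ and $\tau_0$ are bounded by $|\B^{r \times r}| = 2^{r^2}$, so they carry only polynomially many bits; they can be computed by a single forward enumeration of $\mu^i f_q$ for $i = 0, 1, 2, \ldots$, storing only a constant number of boolean matrices at a time, which stays in polynomial space (exponential time in the worst case). By flatness (Lemma \ref{lm:rowqIsFlat2}) every path from $p$ to $q$ in $\CA(E,S)$ must remain in this cycle, so these paths are the only ones to account for.

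Over such a path the total $y$-decrement is $\tau M + P = \tau_0 M + P + (kM) m$ and, exactly as in the proof of Lemma \ref{lm:accelerate}, the single constraint $y' \geq 0$ captures all per-step guards since $y$ is monotonically non-increasing. To stay within PAD one avoids the non-linear term $\tau \cdot M$ by replacing the quantifier on $m$ with a divisibility constraint, and outputs
\[
    \varphi_{p,q}(\bar x, \bar x') \;:=\; (k M) \divides (y - y' - \tau_0 M - P) \;\wedge\; y - y' \geq \tau_0 M + P \;\wedge\; y' \geq 0 \;\wedge\; \bigwedge_{x \in \counters \setminus \{y\}} x' = x.
\]
This is a quantifier-free PAD formula of polynomial size, since $k$ and $\tau_0$ appear only as polynomial-bit coefficients of the linear polynomials $kM$ and $\tau_0 M$. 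The main obstacle is that the cycle length, and with it $k$ and $\tau_0$, can be exponential, ruling out a polynomial-time construction; brute-force orbit enumeration inside the finite monoid $\B^{r \times r}$ nevertheless delivers the required polynomial-space bound.
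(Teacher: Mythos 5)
Your proof is correct and follows essentially the same route as the paper's: accelerate the exponentially long cycle by a single divisibility constraint whose coefficients have polynomially many bits and are computed in polynomial space. The only difference is bookkeeping --- the paper walks the exponential cycle of $\CA(E,S)$ on the fly keeping binary counters for the linear form, whereas you obtain the same coefficients as $k$ copies of the per-traversal decrement of the underlying polynomial-size cycle in $\CA(E)$, with $k$ the period of the monoid orbit; the resulting formulas coincide.
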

\lmcschanged{
\begin{proof}
Firstly, observe that checking that $p$ and $q$ are in the same cycle $\pi$ can 
be checked in polynomial space. 
The proof of this lemma is exactly the same as the proof of Lemma
\ref{lm:accelerate}, but with one important difference: we store the
coefficients for each variable in \emph{binary}, and we find these coefficients
in polynomial space.

    \OMIT{
    ingredient that we will outline
below. If we naively compute the linear expression $M$ as in the proof of Lemma
\ref{lm:accelerate}, we will result in an exponentially large formula. The
important observation is that the cycle $\pi$ is a repetition of a simple cycle 
$C$ in $\CA(E)$, repeated a certain number $t$ of times. This number $t$ is
exponential if written in unary, but \emph{is polynomial} if written in binary.
Therefore, we will instead compute the linear expression $tM$, where $M$ is
computed as in the proof of Lemma \ref{lm:accelerate} for a cycle in $\CA(E)$
and $t$ is determined in polynomial space by going around the cycle $\pi$ once.
    }

    We elaborate the details below.
    Assume that the cycle has length $n$ and is of the form $q_0 \to q_1 \to 
    \cdots \to q_{n-1} \to q_0$, where $q_0 = p$ and $q_j = q$ for some $j \in
    [n-1]$.
    Assume further that the $i$th transition is 
    $(q_i,\Phi_i,q_{i+1 \pmod{n}})$, for each $i \in 
[n-1]$, and each $\Phi_i$ is either $\SUB_{y,z}$ (with $z$ a variable distinct
from $y$) or $\DEC_y$. 
    Note that we are \emph{not} enumerating these nodes explicitly, but instead
    we will do this on the fly by means of a polynomial-space machine.
    The desired formula is of the form
    \[
        \exists \vecZ_0,\vecZ_1: \varphi_{p,p}(\vecX,\vecZ_0) \wedge 
                \Phi(\vecZ_0,\vecZ_1).
            \]
We first define $\varphi_{p,p}(\vecX,\vecX')$. 
Compute a linear expression $M = a_0 + \sum_{x\in X\setminus\set{y}} a_x x$,
where $a_0$ is the number of instructions $i$ in the cycle from $p$ to $p$ such 
    that $\Phi_i = \DEC_y$
and $a_x$ is the number of instructions $i$ such that $\Phi_i = \SUB_{y,x}$.
This can be computed easily by a polynomial machine that keeps $|X|$ counters.
Each time around the cycle, $y$ decreases by $M$.
Thus, for some $n \in \N$ we have $y' = y - nM$, or 
equivalently
\[
    nM = y - y'
\]
The formula $\varphi_{p,p}$ can be defined as
follows:
\[
    \varphi_{p,p} :=\ M \mid (y-y') \wedge 
                        y' \leq y \wedge
                            \bigwedge_{x \in X\setminus\{y\}} x' = x.
\]

The formula $\Phi(\vecX,\vecX')$ can also be defined in a similar way, but we
look at a simple path from $p$ to $q$. More precisely, 
compute a linear expression $M' = a_0' + \sum_{x\in X\setminus\set{y}} a_x' x$,
where $a_0'$ is the number of instructions $i$ in the simple path from $p$ to 
    $q$ such that $\Phi_i = \DEC_y$
and $a_x'$ is the number of instructions $i$ such that $\Phi_i = \SUB_{y,x}$.
The formula $\Phi$ can be defined as follows:
\[
    \Phi :=\ y' = y - M' \wedge
                            \bigwedge_{x \in X\setminus\{y\}} x' = x.
    \tag*{\qedhere}
\]
\end{proof}
}

We conclude this section with the following complementary lower bound
relating to expressivity of length abstractions of regular-oriented word
equations with regular constraints. 
\begin{proposition}
    There exists a regular-oriented word equation with regular constraints
    whose length abstraction is not Presburger.
    \label{prop:nonPres_reg}
\end{proposition}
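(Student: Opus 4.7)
The plan is to exhibit a concrete regular-oriented word equation with regular constraints whose length abstraction encodes a divisibility-by-parameter relation, contradicting Presburger-definability via the non-semilinearity of the divisibility predicate noted in the Preliminaries. The proposed witness is the equation
\[
    E :\ x_1 x_2 y = y z_1 z_2
\]
over the alphabet $\{a,b\}$ with five distinct variables, together with the regular constraints $S = \{x_1 \in a^*,\ x_2 \in b^*,\ z_1 \in b^*,\ z_2 \in a^*\}$.

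First I would verify that $(E,S)$ is regular-oriented and quadratic: every variable occurs at most twice in total (with $y$ once on each side and the others exactly once on one side), and the total order $x_1 < x_2 < y < z_1 < z_2$ witnesses orientedness on both sides of $E$. Next I would reduce the analysis to a pure conjugacy problem: the regular constraints force any solution $\soln$ to assign $\soln(x_1) = a^{p_1}$, $\soln(x_2) = b^{p_2}$, $\soln(z_1) = b^{q_1}$, $\soln(z_2) = a^{q_2}$, so that $E$ reduces to
\[
    a^{p_1} b^{p_2}\, \soln(y) = \soln(y)\, b^{q_1} a^{q_2}.
\]
Comparing Parikh images of the two sides (which conjugate words must share) forces $p_1 = q_2$ and $p_2 = q_1$.

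Next I would invoke the Lyndon--Sch\"utzenberger characterization of conjugators: $uw = wv$ iff there exist $p, q$ and $k \geq 0$ with $u = pq$, $v = qp$, and $w = (pq)^k p$. Assuming $p_1, p_2 > 0$, I would argue that the only factorization $a^{p_1} b^{p_2} = pq$ yielding $qp = b^{p_2} a^{p_1}$ is $p = a^{p_1}$, $q = b^{p_2}$: any interior split of $u$ produces a $qp$ that either starts with $a$ (contradicting the leading $b$-block of $v$) or contains two $b$-blocks separated by an $a$-block (contradicting the single $b$-block of $v$). Consequently $\soln(y) = (a^{p_1} b^{p_2})^k a^{p_1}$ for some $k \in \N$, so $|\soln(y)| \in \{p_1 + k(p_1 + p_2) : k \in \N\}$.

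Finally I would derive the contradiction. Projecting the length abstraction of $(E, S)$ to the coordinates $(|x_1|, |x_2|, |y|)$ and intersecting with the Presburger set $\{p_1, p_2 > 0\}$ yields exactly
\[
    A := \{(p_1, p_2, r) : p_1, p_2 > 0,\ r \geq p_1,\ (p_1 + p_2) \divides (r - p_1)\}.
\]
Projection and intersection with Presburger sets preserve Presburger-definability, so Presburger-definability of the length abstraction would imply that of $A$. Applying the linear map $(p_1, p_2, r) \mapsto (p_1 + p_2,\ r - p_1)$ and projecting, $A$ would map onto $\{(s, r') : s \geq 2,\ r' \geq 0,\ s \divides r'\}$, which is (up to a Presburger restriction) the divisibility relation on $\N^2$, contradicting its non-semilinearity noted earlier. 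The main delicate step is the uniqueness of the Lyndon--Sch\"utzenberger factorization, but this reduces to an elementary case analysis on where the split of $u$ lies within its $a$- and $b$-blocks.
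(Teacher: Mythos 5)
Your proposal is correct. The witness equation and constraints differ from the paper's, but the underlying strategy is the same: take a conjugacy-shaped equation $u\,y = y\,v$, use regular constraints to rigidify the words surrounding the conjugator $y$, and observe that the admissible lengths of $y$ then form an arithmetic progression whose period is a linear function of the other variables' lengths, which encodes divisibility and hence defeats semilinearity. The paper does this with the single equation $xy=yz$ over a three-letter alphabet, using a marker symbol and the constraint $x,y\in\#(a+b)^*$ to force the two conjugate variables to coincide, yielding the relation $|x|=|y|>0 \wedge |x| \mid |z|$ directly on the original coordinates. You instead work over $\{a,b\}$ with $x_1x_2y=yz_1z_2$ and block constraints $a^*,b^*$, pin down the factorization via Lyndon--Sch\"utzenberger together with a block-structure uniqueness argument, and then need an extra (but routine) post-processing step --- restriction to $p_1,p_2>0$, an affine image, and a projection --- to extract a bare divisibility predicate. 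Your route avoids the marker symbol and rests on a cleanly quotable combinatorial lemma, at the cost of two more variables and the final reduction step; the paper's witness is more economical but its matching argument is more ad hoc. All the steps you flag as delicate (orientedness of the order $x_1<x_2<y<z_1<z_2$, uniqueness of the split of $a^{p_1}b^{p_2}$, closure of Presburger-definable sets under intersection, affine images, and projection) check out.
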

\begin{proof}
    \lmcschanged{
Take the regular-oriented word equation $xy = yz$ over the alphabet 
$\set{a,b,\#}$, together with regular constraints $x,y \in \#(a+b)^*$.
    }
We claim that its length abstraction is
precisely the set of triples $(n_x,n_y,n_z) \in \N^3$ satisfying the formula
\[
    \varphi(l_x,l_y,l_z) := l_x = l_y \wedge l_x > 0 \wedge l_x \mid l_z.
\]
Since divisibility is not Presburger-definable,
the theorem immediately follows. 
To show that for each triple $\bar n = (n_x,n_y,n_z)$ satisfying $\varphi$
there exists a solution $\soln$ to $E$ and the constraint
$x,y \in \#(a+b)^*$, simply consider $\soln$ with 
$\soln(x) = \soln(y) = \#a^{l_x - 1}$, and
$\soln(z) = \soln(z) = \soln(x)^{n_z/n_x}$. Conversely, consider a solution
$\soln$ satisfying $xz = zy$ and $x,y \in \#(a+b)^*$. We must have $x = y$
since two conjugates $x,y \in \#(a+b)^*$ must apply a full cyclical permutation,
i.e., the same words. 
We then have
$|\soln(x)| = |\soln(y)| > 0$. To show that
$|\soln(x)| \mid |\soln(z)|$, let $|\soln(z)|= q|\soln(x)| + r$ for some
$q \in \N$ and $r \in [|\soln(x)|-1]$. It suffices to show that $r = 0$.
To this end, matching both sides of $E$, we obtain $z = x^qw$, where $w$ is a
prefix of $\soln(x)$ of length $r$. If $r > 0$, then matching both sides of the
equation from the \emph{right} reveals that the last $|\soln(y)|-1$ letters
on l.h.s. of $\soln(E)$ contains $\#$, which is not the case on r.h.s. of
$\soln(E)$, contradicting that $\soln$ is a solution to $E$. Therefore, $r = 0$,
proving the claim.
\end{proof}
This shows that Presburger with divisibility is crucial for this fragment. We
leave as an open problem whether the same lower bound holds even when regular 
constraints
are not imposed.
\section{Conclusion}
In this paper we revisit the problem of word equations with length constraints.
We show that there is a tight connection between word equations and Presburger
with divisibility constraints (PAD). More precisely, the usual Nielsen
transformation for quadratic word equations can be adapted to incorporate length
constraints, whereby a variant of counter machines (instead of a finite graph)
is generated to represent the proof graph. Through this connection, we have
obtained decidability in the case when this counter machine contains only
1-variable-reducing cycles; this is achieved via acceleration by means of
PAD constraints. We have applied this to a class of
word equations considered in the literature called regular-oriented word
equations, and obtain its decidability in the presence of length constraints.
We have also shown that our result can be easily adapted to incorporate
regular constraints using the standard monoid techniques. For these results,
we have achieved close-to-optimal complexity: NP and PSPACE with oracles to PAD
for regular-oriented word equations with length constraints, respectively,
with and without regular constraints. Note that without length constraints
these problems are already NP-complete \lmcschanged{\cite{DMN17} and PSPACE-complete
\cite{diekert-quadratic1,diekert-quadratic2}, respectively.}

There are many future research directions. The big open question is of course
whether we can use the technique in this paper to prove decidability of
word equations with length constraints. Similarly, can we show that the 
length abstractions of quadratic word equations are necessarily definable in
PAD? Using our technique in this paper, it is easy to show that the set of
solutions of every PAD formula can be captured by some (not necessarily
quadratic) word equations. Perhaps, an easier question is whether one can
discover other interesting classes of word equations with length constraints
that can be proven decidable using our technique (or something similar). 
\lmcschanged{One candidate fragment is the class of regular (but not necessarily
oriented) word equations, e.g., $xyz = zyx$. Recently, Day and Manea \cite{DM20}
proved an amazing result that regular word equations without regular constraints
are NP-complete, which was achieved by also analyzing the graph structure
generated by Nielsen's transformation. It is, however, not clear the result 
could be extended when length constraints are incorporated, especially because
the resulting counter system is no longer flat (even for $xy = yx$). To this 
end, new acceleration techniques for counter systems must therefore be 
developed, which can handle non-flat counter systems and do provide at least
existential Presburger Arithmetic formulas with divisibility. 
}

\subsubsection*{Acknowledgment.} 
We thank Jatin Arora, Dmitry Chistikov, Volker Diekert, 
Matthew Hague, Artur Je\.{z}, Philipp R\"{u}mmer, James Worrell, as well as
anonymous LMCS reviewers for the helpful feedback.
This research was partially funded by the ERC Starting Grant AV-SMP (grant
agreement no. 759969), the ERC Synergy Grant IMPACT (grant agreement no.
610150), and the Max-Planck Fellowship.

\bibliographystyle{alpha}
\bibliography{references}

%
%

\end{document}